\crefname{lemma}{lemma}{lemmata}
\Crefname{lemma}{Lemma}{Lemmata}
\theoremstyle{plain}                          
\newtheorem{theorem}{Theorem}[section]
\newtheorem{proposition}[theorem]{Proposition}    
\newtheorem{lemma}[theorem]{Lemma}
\newtheorem{conjecture}[theorem]{Conjecture}
\theoremstyle{definition}
\newtheorem{definition}[theorem]{Definition}
\newtheorem{prop-defin}[theorem]{Proposition-definition} 
\newtheorem{example}[theorem]{Example}
\theoremstyle{remark}
\newtheorem{remark}[theorem]{Remark}
\renewcommand{\theta}{\vartheta}
\renewcommand{\phi}{\varphi}
\renewcommand{\epsilon}{\varepsilon}
\newcommand{\bs}[1]{\ensuremath{\boldsymbol{#1}}}
\newcommand{\mb}[1]{\mathbb{#1}} 
\newcommand{\mc}[1]{\mathcal{#1}}
\newcommand{\ms}[1]{\mathscr{#1}}
\newcommand{\R}{\mb{R}} 
\newcommand{\C}{\mb{C}} 
\newcommand{\Z}{\mb{Z}} 
\newcommand{\Q}{\mb{Q}}
\renewcommand{\P}{\mb{P}}
\newcommand{\<}{\langle}
\renewcommand{\>}{\rangle}
\DeclareMathOperator{\csch}{csch}
\DeclareMathOperator*{\Res}{Res}
\begin{document}
	
\title{The Laplace Transform and Quantum Curves}

\author[Q.~Weller]{Quinten Weller}
\email{qgw1@nyu.edu}
\address{New York University, Department of Physics, 726 Broadway, New York, NY 10003, USA}	
	
\thanks{}

\begin{abstract}
A Laplace transform that maps the topological recursion (TR) wavefunction to its $x$-$y$ swap dual is defined. This transform is then applied to the construction of quantum curves. General results are obtained, including a formula for the quantisation of many spectral curves of the form $e^xP_2(e^y) - P_1(e^y) = 0$ where $P_1$ and $P_2$ are coprime polynomials; an important class that contains interesting spectral curves related to mirror symmetry and knot theory that have, heretofore, evaded the general TR-based methods previously used to derive quantum curves. Quantum curves known in the literature are reproduced, and new quantum curves are derived. In particular, the quantum curve for the $T$-equivariant Gromov-Witten theory of $\P(a,b)$ is obtained. 
\end{abstract}
	
\maketitle
	
\tableofcontents


\section{Introduction}
\subsection{Motivation}\label{ss:mot}

Topological recursion (TR) was originally discovered as a recursive solution to the large-$N$ expansion of correlation functions in matrix models \cite{E04,CEO06}. It took in as initial data a so-called spectral curve, which is an algebraic curve $P(x,y) = 0$ with additional structure on it's first homology group;\footnote{Rather than directly specifying homological cycles, this structure is often presented in terms of a choice of symmetric bidifferential denoted by $\omega_{0,2}$.} the spectral curve itself was derived from the matrix model action \cite{E04,CEO06}. In \cite{EO07} it was realised that the topological recursion should be considered as a generic way to associate an infinite tower of $n$-differentials $\{\omega_{g,n}\}_{g,n\geq 0}$ to a given spectral curve. Since the key insight of \cite{EO07} the topological recursion has found applications in many areas of mathematics and physics \cite{E14}; most relevantly for the present work topological recursion has applications in mirror symmetry and Gromov-Witten theory \cite{BKMP08,BM08,GS11,BS12,Z12,DOSS14,NS14,EO15,FLZ17,FLZ20,GKLS22} as well as knot theory \cite{ABM12,BE12,GJKS14,DPSS19}.

In a $1$-matrix model the expectation value of the characteristic polynomial is known to be related to systems of orthogonal polynomials \cite{M90}. Many systems of orthogonal polynomials satisfy differential equations, so in a matrix model it is then natural to expect that this expectation value also satisfies a differential equation. In the abstracted language of the topological recursion this is known as the topological recursion/quantum curve correspondence.

Indeed, this correspondence posits the existence of a quantised form of $P(x,y) = 0$
\begin{equation}
	\hat{P}(\hat{x},\hat{y};\hbar)\psi(x;\hbar) = 0\,,
\end{equation}
where $\hat{x} \coloneq x \cdot$, $\hat{y} \coloneq \hbar d/dx$, $\hbar$ is a quantum parameter (the limit $\hbar \to 0$ should recover the classical curve), and $\psi$ is naturally defined in terms of the $\omega_{g,n}$ in a way that imitates the expectation value of the characteristic polynomial in a matrix model.

The topological recursion/quantum curve correspondence has been rigorously and systematically established for many large classes of curves \cite{BE17,EG19,MO20,EGMO21,MO22,BKW23}. However, the applications of topological recursion to mirror symmetry and knot theory require curves of the form $P(e^x,e^y) = 0$ \cite{BKMP08,BE12} where $P$ is a polynomial: these cases are not covered by any of the above works. Furthermore, this correspondence takes on a special importance for curves related to knot theory as the asymptotics of the wavefunction solution of the quantisation of the $A$-polynomial associated to a knot (here $A=P$) is believed to hold information on the associated knot invariants \cite{MM01,G04,G03,DFM11}.

In this work, the special case where $P(e^x,e^y) = 0$ is linear in its first argument is tackled. The key insight is that, for these spectral curves, the dual wavefunction $\psi^{\vee}$, defined by swapping the roles of $x$ and $y$ in the topological recursion, is particularly simple. Indeed, one can often write down $\hat{P}^{\vee}$ based on the explicit form of $\psi^{\vee}$.

However, $\psi^{\vee}$ is not the object of interest, rather $\psi$ is. To remedy this a notion of Laplace transform on a spectral curve $\mc{S}$ is defined with the property that 
\begin{equation}
	\psi(x;\hbar) = \ms{L}_{\mc{S}}^{-1}\left[\psi^{\vee}(y;\hbar)\right](x)\,.
\end{equation}
This Laplace transform is studied and its key properties are established. In particular it is demonstrated how one can use the Laplace transform to create a dictionary translating dual quantum curves to quantum curves for the original spectral curve of interest. This allows for a general result regarding the quantisation of certain curves of the form $P(e^x,e^y) = 0$, as well as providing a simple method to quantise many other curves.

\subsection{Main results}

The notion of Laplace transform central to the present work is defined in \cref{d:Lap}, it is shown that the Laplace transform of the wavefunction is the dual wavefunction in \cref{t:Lap}, and further key properties of the Laplace transform are established in \cref{l:LapInv,l:LapParts}. A general result for the quantisation of curves of the form $P_2(e^y) e^x - P_1(e^y)  = 0$ is obtained in \cref{t:QC}.

The main application of the Laplace transform studied here is to the curve 
\begin{equation}
	P(x,y) = x - \sum_{l=-b}^{a} \left[ q_le^{ly}+w_l\log(q_le^{ly}) \right] = 0\,,
\end{equation}
where $a,b\in\Z_{\geq 1}$, which is related to the Gromov-Witten theory of $\P(a,b)$ \cite{T18}. From the Laplace transform, the following quantum spectral curve is obtained (\cref{t:GW})
\begin{equation}
	\left[ \hat{x}+\frac{1}{2}\hbar - \sum_{l=-b}^{a} \left[ q_le^{l\hat{y}}+w_l(l\hat{y}+\log q_l) \right] - C \right]\psi(x;\hbar) = 0\,,
\end{equation}
where $C\in\bigoplus_{l}2w_l\pi i\Z$ is a constant depending on the branch choice of the logarithm. Many special cases of this result have been established in the literature \cite{DMNPS17,CG19,A21}. See the discussion succeeding \cref{t:GW} for more details.

Additionally, quantum curves for the spectral curves $P(x,y) = e^x + e^{(f+1)y} - e^{fy} = 0$, $P(x,y) = c^P e^x(e^{y}-c^2)^Q - c^Q e^{Py}(e^{y}-1)^Q = 0$, and $P(x,y) = e^x - ye^{-\phi(y)} = 0$ are obtained in \cref{sss:mc,sss:QP,sss:H}, respectively.

\subsection{Outline}

In \cref{s:back} the necessary background is reviewed. The topological recursion is defined (actually, the so-called LogTR of \cite{ABDKS24} is used), and the notions of spectral curve and quantum curve are reviewed. Furthermore, a crash course in the $x$-$y$ swap duality is provided. 

Next, in \cref{s:Lap} the Laplace transform (or, better, the version of the Laplace transform used in the present work) is defined, and its key properties are discussed and proven. In \cref{ss:gen} the use of the Laplace transform in deriving quantum curves is discussed, and examples of the Laplace transform in action are provided. Furthermore, a general result on the quantisation of curves of the form $P_2(e^y) e^x - P_1(e^y) = 0$ is established. 

In \cref{sss:GWP1}, the Gromov-Witten theory of $\P^1$ is reviewed, and the quantum spectral curve is derived using the machinery built in previous sections. Generalising \cref{sss:GWP1}, \cref{sss:GWPab} discusses the $T$-equivariant Gromov-Witten theory of $\P(a,b)$, and the associated quantum spectral curve is computed.

Finally, the paper is summarised, and directions for future study are discussed in \cref{s:con}.

\subsection{Acknowledgments}

The author would like to thank Sergey Shadrin and Yifan Wang for useful discussions, Vincent Bouchard for pointing out the key article \cite{T18}, and acknowledges that his research is supported in part by the NSERC PGS-D program. The author would especially like to thank Reinier Kramer for his useful comments regarding an earlier draft of the present work.

\section{Background}\label{s:back}
\subsection{Spectral curves and topological recursion}

The central object in the theory of topological recursion is the so-called \emph{spectral curve}.
\begin{definition}
	A \emph{spectral curve} $\mc{S}$ is a quadruple $\mc{S} = \left(\Sigma,\, x,\, y,\, \omega_{0,2}\right)$ such that:
	\begin{itemize}
		\item $\Sigma$ is a compact Riemann surface;
		\item $dx$ and $dy$ are meromorphic $1$-forms on $\Sigma$;
		\item $\omega_{0,2}$ is a symmetric meromorphic bidifferential on $\Sigma^2$ whose only pole is a double pole on the diagonal with biresidue one.
	\end{itemize}
	The \emph{genus} of a spectral curve is the genus of $\Sigma$. To any spectral curve one associates the not-necessarily meromorphic $1$-form $\omega_{0,1} \coloneq ydx$.
\end{definition}

All spectral curves considered in the present work have genus zero. From here and henceforth, $z$ will denote an affine coordinate on $\Sigma\cong \P^1$ and $(z_0,z_1,\dots,z_{n})$ denotes the corresponding induced coordinates on $\Sigma^{n+1}$. As the multiset $\{z_1,\dots,z_n\}$ appears frequently it will be denoted by $z_{\llbracket n \rrbracket}$. Since the spectral curve is genus zero, there is actually only one choice for the bidifferential $\omega_{0,2}$ \cite{EO07}, namely
\begin{equation}
	\omega_{0,2}(z_1,z_2) = \frac{dz_1dz_2}{(z_1-z_2)^2}\,.
\end{equation}

If $x$ and $y$ are themselves meromorphic, they satisfy identically a polynomial equation 
\begin{equation}
	P(x,y) = 0\,,
\end{equation}
which can be taken to define the spectral curve by denoting the functions $x$ and $y$ as projections onto the respective coordinates. In general, when $x$ and $y$ are not meromorphic, $P$ must be generalised to an entire function in its two variables. 

The topological recursion is not well-defined for an arbitrary spectral curve. Sufficient conditions to define the topological recursion are referred to as \emph{admissibility}.

\begin{definition}
	A spectral curve is called \emph{admissible} if all the zeros of $dx$ are simple, and $dy$ is regular and non-zero at all of these simple zeros.
\end{definition}

\begin{remark}
	The definition of an admissible spectral curve can be vastly generalised and the topological recursion still made sense of. For example, see \cite{BE13,BBCCN18} for non-simple ramification points, \cite{BKS20} for singular curves, or \cite{BKW23} for the case when $x$ and $y$ have essential singularities.
\end{remark}

With this the definition of topological recursion itself may be presented.
\begin{definition}\label{d:TR}
	Given an admissible spectral curve $\mc{S}$ let $R$ denote the zeros of $dx$ and for each $a\in R$ let $\sigma_a(z)$ denote the local Galois involution near $a$, i.e. $x\circ\sigma_a = x$ and $\sigma_a(a) = a$. Denote by $L$ the simple poles of $dy$ and for each $a\in L$ put $\alpha_a \coloneq \Res_{a} dy$. Then define the following infinite tower of $n$-differentials, $\{\omega_{g,n}\}_{g\geq 0,n\geq 1}$, recursively on $2g+n-2$ through the formula
	\begin{multline}\label{e:TR}
		\omega_{g,n+1}(z_0,z_{\llbracket n \rrbracket}) \coloneq \frac{\delta_{n,0}}{2}\Res_{u=0}\frac{du}{u^{2g}} \sum_{a\in L} \Res_{z=a} \left( d_{z} \csch\left( \frac{ u\partial_{x(z)} }{ 2\alpha_a } \right)  \log\left(z-a\right) \right) \int_a^z\omega_{0,2}(z_0,\cdot)\\ 
		+\sum_{a\in R}\Res_{z=a} \left( \frac{ \omega_{g-1,n+2}(z,\sigma_a(z),z_{\llbracket n \rrbracket}) } { (y(\sigma_a(z))-y(z))dx(z) } + \sum_{\substack{Z_1 \sqcup Z_2 = z_{\llbracket n \rrbracket} \\ g_1+g_2 = g \\ (g_i,|Z_i|)\neq (0,0)}} \frac{ \omega_{g_1,|Z_1|+1}(z,Z_1)\omega_{g_2,|Z_2|+1}(\sigma_a(z),Z_2) } { (y(\sigma_a(z))-y(z))dx(z) } \right) \int_a^z\omega_{0,2}(z_0,\cdot) \,,
	\end{multline}
	where $\partial_{x(z)} = (dz/dx(z))\partial_z$, $\sqcup$ denotes the disjoint union, and $u$ is a formal variable; the residue at $u=0$ indicates that the $u^{-1}du$ term in the formal expansion about $u=0$ is picked. In the above formula, the convention $\partial_{x(z)}^{-1}\log(z-a) = 0$ for all $a\in L$ is used.
	
	The $\omega_{g,n}$ are referred to as the \emph{correlators} of the spectral curve $\mc{S}$: if $2g+n-2\leq 0$ they are called \emph{unstable} and if $2g+n-2>0$ they are called \emph{stable}. A point $b$ on $\Sigma$ is called \emph{regular} if all stable correlators have no poles at $b$.
\end{definition}

\begin{remark}\label{r:LogTR}
	The definition of topological recursion given here is what was called logarithmic topological recursion (LogTR) in \cite{ABDKS24}.\footnote{It should be noted that \cite{ABDKS24} uses the convention $\omega_{0,1} = -ydx$, rather than $\omega_{0,1} = ydx$.} The difference from the original Eynard-Orantin formalism was the addition of the first term on the RHS of \eqref{e:TR}, which only directly contributes to the calculation of the $\omega_{g,1}$; the primary motivation for these extra contributions was that the $x$-$y$ swap duality (see \cref{t:x-yswap}) holds more generally in the LogTR setting. In the present work, topological recursion is taken to mean LogTR. 
\end{remark}

\begin{remark}\label{r:reg}
	Observe that points which are poles of both $dx$ and $dy$ are automatically regular. Indeed, for such points the residues at $a\in L$ must vanish.
\end{remark}

The interest in the topological recursion lies, in part, because of the remarkable properties of the $\omega_{g,n}$. Some of these properties are summarised in the following theorem.

\begin{theorem}\label{t:prop}
	The differentials $\omega_{g,n}$ enjoy the following properties:
	\begin{itemize}
		\item the $\omega_{g,n}$ are symmetric in their $n$ variables;
		\item the $\omega_{g,n}$ have vanishing residues at all points;
		\item the $\omega_{g,n}$ have poles only at the zeros of $dx$ and the simple poles of $dy$;
		\item under the rescalings $x\to ax$ and $y\to by$ the correlators transform as $\omega_{g,n} \to (ab)^{2-2g-n} \omega_{g,n}$.
	\end{itemize}
	\begin{proof}
		These properties are well known. See \cite{EO07,ABDKS24}.
	\end{proof}
\end{theorem}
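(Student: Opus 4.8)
The plan is to prove all four properties simultaneously by induction on the quantity $2g+n-2$, which I will call the \emph{level}, since the recursion in \cref{d:TR} expresses each $\omega_{g,n+1}$ through correlators of strictly smaller level. The base of the induction consists of the lowest stable correlators $\omega_{0,3}$ and $\omega_{1,1}$ (both of level one), for which the properties are checked by direct computation from the recursion; the unstable $\omega_{0,1}=ydx$ and $\omega_{0,2}$ enter only as input data governed by the stated conventions. At the inductive step one assumes all four properties for every correlator of level $<2g+n-1$ and deduces them for $\omega_{g,n+1}$. It is essential that the induction be simultaneous, since the pole/residue analysis for $z_0$ below invokes the symmetry of the \emph{lower} correlators.

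Three of the four properties are comparatively routine. Symmetry in the last $n$ variables $z_1,\dots,z_n$ is manifest: these enter only through the inductively symmetric lower correlators and through the sum over $Z_1\sqcup Z_2 = z_{\llbracket n \rrbracket}$, which is itself symmetric. The location of the poles in $z_1,\dots,z_n$, and the vanishing of residues in these variables, are inherited verbatim from the inductive hypothesis. The rescaling law is pure bookkeeping: under $x\to ax$, $y\to by$ one has $dx\to a\,dx$, $\omega_{0,1}\to ab\,\omega_{0,1}$, $\alpha_a\to b\,\alpha_a$ and $\partial_{x(z)}\to a^{-1}\partial_{x(z)}$; tracking these through the kernel $\big((y(\sigma_a(z))-y(z))dx(z)\big)^{-1}$ (weight $(ab)^{-1}$) and through the substitution $u\to ab\,u$ in the $\csch$ term (contributing $(ab)^{1-2g}$ from $du/u^{2g}$) reproduces the claimed weight $(ab)^{2-2g-n}$ once the inductive weights of the inputs are inserted.

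The substance concerns the distinguished variable $z_0$, which enters only through the factor $\int_a^z\omega_{0,2}(z_0,\cdot)$. In genus zero this primitive equals $\frac{dz_0}{z_0-z}-\frac{dz_0}{z_0-a}$, so as a $1$-form in $z_0$ it has simple poles only at $z_0=z$ and $z_0=a$. After the outer residue $\Res_{z=a}$ the first pole merges with $z_0=a$, and since the outer residues run over precisely $a\in R$ and $a\in L$, the correlator $\omega_{g,n+1}$ acquires poles in $z_0$ only at the zeros of $dx$ and the simple poles of $dy$, as claimed. Vanishing of the residue in $z_0$ is then checked at each such $a$: arranging the $z_0$-contour inside the $z$-contour so that $z_0=z$ contributes nothing, $\Res_{z_0=a}$ picks out $-1$ from $-\frac{dz_0}{z_0-a}$, whence $\Res_{z_0=a}\omega_{g,n+1}=-\Res_{z=a}(\text{kernel})$. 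For $a\in R$ this vanishes because the kernel is anti-invariant under $z\mapsto\sigma_a(z)$ — the numerator is invariant by the inductive symmetry of the lower correlators, while $(y(\sigma_a(z))-y(z))dx(z)$ changes sign — and an anti-invariant form has even local expansion at the fixed point $a$, hence no residue there. For $a\in L$ the analogous vanishing is supplied by the structure of the $\csch$ contribution, which is exactly the point addressed in the LogTR treatment of \cite{ABDKS24}.

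The hard part will be the full symmetry, that is, invariance of $\omega_{g,n+1}$ under exchanging $z_0$ with any $z_i$, since $z_0$ plays a structurally privileged role and symmetry is in no way manifest from \cref{d:TR}. I would follow the classical Eynard--Orantin route: express the antisymmetrised combination $\omega_{g,n+1}(z_0,z_1,\dots)-\omega_{g,n+1}(z_1,z_0,\dots)$ as a double residue $\Res_{z=a}\Res_{w=a'}$, expand each summand one further step of the recursion so that the two distinguished variables appear on an equal footing, and then collapse the result using the symmetry of $\omega_{0,2}$ together with the quadratic relations satisfied by the lower correlators, showing the antisymmetric part integrates to zero. Equivalently, and perhaps more cleanly, one verifies that the family produced by the recursion satisfies the linear and quadratic abstract loop equations together with the pole bound established above, and invokes the fact that these conditions pin down a symmetric family. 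This symmetry step is the only place where a genuinely delicate residue manipulation is needed; the remaining properties fall out of the inductive scheme essentially by inspection.
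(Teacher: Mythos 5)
The paper's own ``proof'' of this theorem is a literature citation to \cite{EO07,ABDKS24}, so there is no internal argument to compare against; your proposal attempts more than the paper does, and its skeleton (induction on $2g+n-2$, rescaling bookkeeping, analysis of the $z_0$-dependence through the primitive $\bigl(\tfrac{1}{z_0-z}-\tfrac{1}{z_0-a}\bigr)dz_0$) is the standard Eynard--Orantin strategy and is sound as far as it goes. The genuine gap is that the two hard points are never actually proven. The $z_0\leftrightarrow z_i$ symmetry step is left as a plan stated in the conditional (``I would follow the classical Eynard--Orantin route\dots''), and --- more importantly --- both routes you propose are calibrated to ordinary Eynard--Orantin TR, whereas \cref{d:TR} is LogTR. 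The extra $\csch$ term lives in every $\omega_{g,1}$, and these $\omega_{g_1,1}(z)$ factors enter the quadratic sum of the recursion (through the terms with $Z_i=\emptyset$, $g_i\geq 1$) for all higher $(g,n)$; moreover they give all correlators poles at the simple poles of $dy$, which violates the pole hypotheses under which the classical double-residue symmetry computation of \cite{EO07} and the uniqueness-from-abstract-loop-equations argument are proved. Repairing the symmetry and pole statements in the presence of these log terms is precisely the nontrivial content of \cite{ABDKS24}, and you defer to that treatment explicitly a second time for the residue analysis at $a\in L$. Once both LogTR-specific steps are delegated to the reference, the proposal establishes no more than the paper's citation does.

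There is also a technical slip worth repairing in the residue argument. The correlator is defined with $z_0$ generic, i.e.\ \emph{outside} the residue contour, so its meromorphic continuation to $z_0$ near $a$ differs from your ``$z_0$-contour inside'' prescription by the crossing contribution $\Res_{z=z_0}$, i.e.\ by $\pm k(z_0)dz_0$ where the kernel is $K=k\,dz$. Keeping this term, one gets $\Res_{z_0=a}\omega_{g,n+1}=-\Res_{z=a}K+\Res_{z=a}K=0$ identically, for $a\in R$ and $a\in L$ alike: no anti-invariance is needed, and the $a\in L$ case you had to defer comes for free. Your version computes the residue of the wrong branch and then rescues it at ramification points via anti-invariance, which happens to give the right answer there but creates an unnecessary obstruction at the simple poles of $dy$. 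Finally, the claim that poles and residues in $z_1,\dots,z_n$ are ``inherited verbatim from the inductive hypothesis'' overlooks the unstable factors $\omega_{0,2}(z,z_i)$ appearing in the recursion: their diagonal pole collides with the residue contour as $z_i\to a$, and this collision --- not inheritance --- is the actual source of the poles of $\omega_{g,n+1}$ in those variables.
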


\subsection{Quantising spectral curves}

Quantum curves were motivated in \cref{ss:mot}. Here, to be more precise, let $ \mathcal{S} = \left( \P^1,\, x,\, y,\, \omega_{0,2} \right) $ be an admissible spectral curve. As discussed the functions $x$ and $y$ satisfy a relation $P(x,y)=0$ where $P$ is entire in both its variables. The goal is to `quantise' $P(x,y)$ (replace $x$ and $y$ with non-commuting operators), but before this task may be completed, one first defines the wavefunction this quantisation should kill.

\begin{definition}\label{d:wave}
	Given an admissible spectral curve $\mathcal{S}$ and a regular point $b$ define the \emph{wavefunction}
	\begin{equation}\label{e:wfuncconj}
		\psi(z;b) = \exp\left[ \sum_{n=1}^{\infty} \sum_{g=0}^{\infty} \frac{\hslash^{2g+n-2}}{n!} \int^z_b\cdots\int^z_b \left( \omega_{g,n}  - \delta_{g,0} \delta_{n,2} \frac{dx(z_1) dx(z_2)}{(x(z_1) - x(z_2))^2} \right) \right] \,,
	\end{equation}
	here $ \hbar $ is a formal expansion parameter and there are $n$ integrations in each term. The $(0,1)$ and $(0,2)$ terms may need to be regularised, but this will only introduce a constant factor of the form $e^{\frac{1}{\hbar}A}e^B$ for $A,B\in\C$.
\end{definition}

The statement of the topological recursion/quantum curve correspondence is that there should exist an operator $\hat{P}(\hat{x}, \hat{y}, \hslash )$ such that 
\begin{equation}
	\hat{P}(\hat{x}, \hat{y}, \hslash ) \psi(z;b) = 0\,,
\end{equation}
where $ \hat{x} \coloneq x \cdot $ and $ \hat{y} \coloneq \hbar d/dx$. This $ \hat{P}$ should be a quantisation of the original $P$ in the vague sense that substituting in $\hbar = 0$ and replacing $\hat{x}$ and $\hat{y}$ with their commuting counterparts $x$ and $y$ will bring one back to the original curve $P$. This vague sense is made precise in the following definition. 

\begin{definition}\label{d:Quant}
	Let $\mathcal{S}$ be a spectral curve such that $x$ and $y$ obey $ P(x,y) = 0 $. $\hat{P}(\hat{x},\hat{y};\hslash) $ is called a \emph{quantisation} of $\mc{S}$ (or, equivalently, $P$) if the following expansion exists
	\begin{equation}
		\hat{P}(\hat{x},\hat{y};\hslash) = P(\hat{x},\hat{y}) + \sum_{i=1}^{\infty} \hslash^i\hat{P}_i(\hat{x},\hat{y})\,,
	\end{equation}
	where $P(\hat{x},\hat{y})$ is taken to be normally ordered (in each term all the $\hat{x}$ are put to the left of the $\hat{y}$) and the $ \hat{P}_i $ are  normal ordered polynomials of degree at most $ \deg P - 1 $. The quantisation is referred to as \emph{simple} if all but finitely many of the $P_i$ vanish.
\end{definition}

At last the topological recursion/quantum curve correspondence can be stated precisely.

\begin{conjecture}\label{con:GS}
	Let $\mathcal{S}$ be an admissible spectral curve where $x$ and $y$ satisfy the relation $ P(x,y) = 0 $. Then there exists a regular point $b\in \Sigma$ and a quantisation $\hat{P}(\hat{x},\hat{y}; \hbar)$ of $P(x,y)$ such that
	\begin{equation}
		\hat{P}(\hat{x},\hat{y}; \hslash) \psi(z;b) = 0\,.
	\end{equation}
	$\hat{P}(\hat{x},\hat{y}; \hslash) $ is called the \emph{quantum curve} or \emph{quantum spectral curve}.
\end{conjecture}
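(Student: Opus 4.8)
Since the final statement is posed as a conjecture rather than a settled result, what I can offer is a strategy whose successful completion would establish it; the points where the argument stalls are precisely the reasons it remains open. The plan is to proceed by a WKB analysis of the wavefunction of \cref{d:wave}. Writing $\log\psi(z;b) = \sum_{k\geq 0}\hbar^{k-1}S_k(z)$, the definition of $\psi$ identifies each $S_k$ with the sum of $\frac{1}{n!}\int_b^z\cdots\int_b^z\omega_{g,n}$ over the pairs with $2g+n-2 = k-1$, after the regularisation of the $(0,1)$ and $(0,2)$ contributions noted in \cref{d:wave}. The leading term is $S_0 = \int_b^z y\,dx$, so $dS_0/dx = y$, and applying $\hat y = \hbar\,d/dx$ to $\psi$ gives $\hat y\psi = (y + O(\hbar))\psi$. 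Hence at $\hbar=0$ the relation $P(\hat x,\hat y)\psi = O(\hbar)$ already encodes $P(x,y)=0$, and the entire content of the conjecture is the existence of the lower-order corrections $\hat P_i$ of \cref{d:Quant} that annihilate the remaining orders in $\hbar$.

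The core of the argument would be to translate the topological recursion of \cref{d:TR} into a system of constraints on the $S_k$ — the abstract (or quantum) loop equations — and to integrate these. Concretely, one forms the normally ordered operator $P(\hat x,\hat y)$, lets it act on $\psi$, and collects the result order by order in $\hbar$; each order is a function of $z$ that must be written as a differential polynomial in $x$ applied to $\psi$ of degree strictly below $\deg P$, and these are exactly the $\hat P_i$. The loop equations, which are the polynomiality statements underlying \cref{t:prop} (vanishing residues and the restricted pole structure), are what should force these residual functions to be rational with controlled poles, hence realisable by bounded-degree corrections. One would fix the base point $b$ to be regular so that the stable correlators contribute no poles there, as in \cref{r:reg}, and check that path dependence and the regularisation constants alter $\psi$ only by a factor $e^{A/\hbar}e^{B}$, which does not affect annihilation by $\hat P$.

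The main obstacle, and the reason \cref{con:GS} is stated as a conjecture, is the degree bound. There is no general mechanism forcing the order-$\hbar^i$ residues to be killable by corrections $\hat P_i$ of degree at most $\deg P - 1$; this requires a uniform polynomiality result for the WKB coefficients that is presently known only case by case. The difficulty is sharpest in the class motivating this paper, curves $P(e^x,e^y)=0$: there the natural quantisation replaces $\hat y = \hbar\,d/dx$ by the shift operator $e^{\hbar\,d/dx}$, so $P(\hat x,\hat y)$ is no longer a finite-order differential operator and the normal-ordering and degree conventions of \cref{d:Quant} must be reinterpreted. It is exactly to sidestep this that the present work does not attempt the conjecture in full generality but restricts to curves linear in $e^x$, where $\psi^\vee$ is explicit and the Laplace transform of \cref{d:Lap,t:Lap} transports a simply-guessed dual quantum curve back to the curve of interest. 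A complete proof of \cref{con:GS} would require both removing the linearity restriction and supplying the missing uniform degree bound, neither of which the loop-equation strategy currently delivers in general.
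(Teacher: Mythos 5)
Since \cref{con:GS} is stated in the paper as an open conjecture, there is no internal proof to compare your attempt against, and your choice to present a strategy together with its obstructions, rather than claim a proof, matches the paper's own framing exactly. Your WKB set-up is accurate: $S_0 = \int_b^z y\,dx$ does give $\hat y\psi = (y+O(\hbar))\psi$, the $e^{A/\hbar}e^{B}$ invariance remark is correct, and your two named sticking points --- the uniform degree bound demanded by \cref{d:Quant}, and the fact that for curves $P(e^x,e^y)=0$ the quantised $e^{\hat y}$ is a shift operator rather than a finite-order differential operator --- agree with the paper's observation that none of the known general results (\cite{BE17,EG19,MO20,EGMO21,MO22,BKW23}) permit $\log$ cuts in $x$ or $y$. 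What you should note, however, is that the loop-equation/WKB route you sketch is the strategy of that cited literature, not of this paper: the paper's own partial progress toward the conjecture proceeds along a genuinely different line. Rather than integrating loop equations order by order in $\hbar$, it exploits the $x$-$y$ swap: when $y$ is unramified the recursion collapses on the dual side ($\omega^{\vee}_{g,1}$ is given in closed form by the LogTR identity invoked in the proof of \cref{t:QC}), so $\psi^{\vee}$ is explicit, a dual quantum curve can be written down by inspection, and \cref{t:Lap} together with \cref{l:LapParts} transports it back through the Laplace transform, exchanging multiplication by $y^{\vee}$ with $\hbar\,d/dx^{\vee}$ exactly rather than asymptotically. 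This yields exact quantisations (\cref{t:QC}, \cref{p:GWcurve}) precisely in the exponential class where your strategy stalls --- shift operators are handled with no degree-bound issue at all --- at the price of restricting to curves linear in $e^x$ and to base points that are common poles of $dx$ and $dy$; your route, where it works, tolerates ramified $y$ but offers no mechanism for the shift-operator regime. One small quibble: the degree bound of \cref{d:Quant} is already ill-posed for entire $P$ (where $\deg P$ is undefined), a point your proposal inherits from the paper rather than introduces.
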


\begin{remark}
	It's important to notice that quantum curves are, in general, much simpler than the topological recursion formula (\cref{d:TR}). This constitutes a key motivation for their study \cite{N15}.
\end{remark}

In general, there may be more than one quantisation corresponding to different choices of basepoint $b$ (the integration can also be generalised to an arbitrary degree zero divisor \cite{BE17}). This conjecture has now been proved for many large classes of spectral curves \cite{BE17,EG19,MO20,EGMO21,MO22,BKW23}. However, none of these results allow for $x$ or $y$ to posses $\log$ cuts...

\subsection{$x$-$y$ swap duality}

One can define a topological recursion relation for a $2$-matrix model where one has two fields (matrices) $M_1$ and $M_2$ \cite{CEO06}. In this context there is an obvious duality, namely the interchange of $M_1$ and $M_2$. It is then a natural question to ask how this duality manifests itself in the topological recursion; the answer is that switching $M_1$ and $M_2$ interchanges $x$ and $y$. The so-called $x$-$y$ swap is the abstraction of this duality to general spectral curves and their associated systems of correlators \cite{EO08}. This is made precise in the following definition.

\begin{definition}
	Given a spectral curve $\mathcal{S} = \left(\Sigma,\, x,\, y\, \omega_{0,2}\right)$ define the $x$-$y$ swap dual spectral curve
	\begin{equation}
		\mathcal{S}^{\vee} \coloneq \left(\Sigma^{\vee}\coloneq \Sigma,\, x^{\vee} \coloneq y,\, y^{\vee} \coloneq -x,\, \omega_{0,2}^{\vee}\coloneq \omega_{0,2}\right)\,.
	\end{equation}
	A spectral curve is called \emph{fully admissible} if both $\mc{S}$ and $\mc{S}^{\vee}$ are admissible. 
\end{definition}

Notice that ${\mc{S}^{\vee}}^{\vee} = \mc{S}$ (up to flipping the sign of both $x$ and $y$, which does not affect the topological recursion in any way by \cref{t:prop}); in fact, the two curves $\mathcal{S}$ and $\mathcal{S}^{\vee}$ are intimately connected \cite{EO07,EO08,ABDKS22,ABDKS23,ABDKS24,H23b,H23a,H23c}. It is beyond the scope of the present work to review all such developments, and many of the explicit results in this subfield are very lengthy. In light of this, this subsection will only highlight results that will be useful in the present work; it should not be taken as a general overview.

The fundamental result is the following \cite{ABDKS22,ABDKS24}.
\begin{theorem}\label{t:x-yswap} 
	Let $\mc{S}$ be a fully admissible spectral curve and denote by $\{\omega_{g,n}\}_{g\in\Z_{\geq 0},n\in\Z_{\geq 1}}$ the corresponding system of correlators. Denote by $\{\omega_{g,n}^{\vee}\}_{g\in\Z_{\geq 0},n\in\Z_{\geq 1}}$ the system of correlators associated to $\mathcal{S}^{\vee}$. Then
	\begin{equation}
		\omega_{g,n}^{\vee}(z_{\llbracket n \rrbracket}) = \mathrm{Expr}_{g,n}\left(\left\{\omega_{g',n'}\right\}_{2g'+n'\leq 2g+n},\,\left\{dx(z_i),dy(z_i)\right\}_{i\in \llbracket n \rrbracket},\,\left\{\frac{dx(z_i)dx(z_j)}{\big( x(z_i) - x(z_j) \big)^2}\right\}_{i,j\in \llbracket n \rrbracket}\right),
	\end{equation}
	where $\mathrm{Expr}_{g,n}$ is some explicit expression that, for each $g,n$, is polynomial in its arguments and their derivatives with respect to the $z_i$.
\end{theorem}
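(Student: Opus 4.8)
The plan is to prove the formula by induction on the level $2g+n-2$, exploiting the standard characterization of topological-recursion correlators: on a fully admissible curve the system $\{\omega_{g,n}\}$ is the unique collection of symmetric meromorphic differentials that have poles only at the zeros of $dx$, have vanishing residues everywhere (\cref{t:prop}), and obey the linear and quadratic loop equations (the period normalization being vacuous in genus zero). The idea is to build the right-hand side $\mathrm{Expr}_{g,n}$ from lower-level data and then verify that it satisfies exactly this characterization for the \emph{dual} curve $\mc{S}^{\vee}$, whose ramification is governed by $dx^{\vee} = dy$; uniqueness then forces it to equal $\omega_{g,n}^{\vee}$. The hypotheses that $\mc{S}$ be simple and global are precisely what guarantee that both analytic structures are well behaved enough for this characterization to apply on either side of the swap.

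The base cases are the unstable correlators, handled by integration by parts. Since $\omega_{0,1} = y\,dx$ while $\omega_{0,1}^{\vee} = -x\,dy$, the exact form $d(xy) = y\,dx + x\,dy$ yields $\omega_{0,1}^{\vee} = \omega_{0,1} - d(xy)$; likewise $\omega_{0,2}^{\vee} = \omega_{0,2}$ by definition, but with the regularizing term $dx(z_1)\,dx(z_2)/(x(z_1)-x(z_2))^2$ replaced by its $y$-analogue, the discrepancy being re-expressed through the listed building blocks $dx(z_i)$ and $dy(z_i)$ via $d\log(x(z_1)-x(z_2))$. These unstable identities are the seeds that the induction then propagates into the stable range, where $\mathrm{Expr}_{g,n}$ is genuinely polynomial in the stated arguments.

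For the inductive step I would assemble a candidate from the already-known $\omega_{g',n'}$ with $2g'+n' < 2g+n$ together with the differentials in the statement, and then, regarding it as a $1$-form in $z_1$ with the other variables frozen, check that (i) its only poles lie at the zeros of $dy$, (ii) it satisfies the linear loop equation of $\mc{S}^{\vee}$ at each such zero, and (iii) it satisfies the quadratic loop equation of $\mc{S}^{\vee}$. The main obstacle is exactly (i)--(iii): the building blocks of the candidate are organized by residues at the zeros of $dx$, whereas the loop equations to be verified are localized at the generically disjoint zeros of $dy$, so the two analytic pictures must be reconciled globally rather than pointwise. This reconciliation is where an integral transform of Laplace type — of the kind developed in this paper at the level of generating functions — becomes indispensable: repackaging the generating series of correlators under such a transform makes the behaviour near the $dy$-ramification locus manifest, and reduces the verification of the dual loop equations to the single- and double-variable functional relations already fixed in the base cases. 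I would therefore decompose the full swap into elementary moves, track how each move acts on the loop equations, and close the induction through this reduction.
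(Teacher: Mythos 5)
Your proposal does not match the paper's treatment, but more importantly it has genuine gaps as a standalone proof. First, note that the paper does not prove \cref{t:x-yswap} at all: it is quoted from the literature (the works of Alexandrov--Bychkov--Dunin-Barkowski--Kazarian--Shadrin), where $\mathrm{Expr}_{g,n}$ is an explicit, universal sum over graphs, and the proof there proceeds by entirely different (operator-formalism/functional-relation) methods. Measured against the statement itself, the central gap in your argument is that the \emph{entire content} of the theorem is the existence of the explicit polynomial expression $\mathrm{Expr}_{g,n}$, and your induction never constructs it. The step ``assemble a candidate from the already-known $\omega_{g',n'}$ together with the differentials in the statement'' is a placeholder for precisely the hard part: without a concrete candidate there is nothing to which your verification steps (i)--(iii) can be applied, and no uniqueness principle can produce the formula for you. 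Even your base case already strays from the claimed form: $\omega_{0,1}^{\vee} = \omega_{0,1} - d(xy)$ involves the function $xy$, which is not polynomial in the listed building blocks $\omega_{0,1}$, $dx(z_i)$, $dy(z_i)$ and their $z_i$-derivatives.

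Two further problems would defeat the strategy even if a candidate were supplied. (a) Circularity: you propose to reconcile the $dx$- and $dy$-ramification pictures using ``an integral transform of Laplace type --- of the kind developed in this paper.'' But in this paper the Laplace-transform statements (\cref{t:BAkernel}, \cref{t:Lap}) are \emph{deduced from} \cref{t:x-yswap}; the logical order runs the other way, so this machinery is not available when proving the swap formula. (b) Wrong uniqueness criterion: in this paper ``topological recursion'' means LogTR (\cref{d:TR}, \cref{r:LogTR}), and by \cref{t:prop} the correlators have poles not only at the zeros of $dx$ but also at the simple poles of $dy$; dually, the $\omega_{g,1}^{\vee}$ acquire poles at the simple poles of $dx$ in addition to the zeros of $dy$. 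Consequently the characterization you invoke --- poles only at the zeros of $dx^{\vee}=dy$ together with the standard linear and quadratic loop equations --- does not characterize the system $\{\omega_{g,n}^{\vee}\}$ here; indeed, the logarithmic corrections were added to the recursion precisely so that the $x$-$y$ swap holds, and they fall outside the classical Eynard--Orantin loop-equation framework your induction relies on. So the inductive step, as formulated, would be verifying the wrong conditions.
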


The actual form of $\mathrm{Expr}_{g,n}$ involves complicated sums over graphs and will not be needed in the present work. The interested reader may consult \cite{ABDKS22}.

Next, formal Gaussian integration is defined. As will become apparent, this integration is a prominent player in the $x$-$y$ swap duality. In the context of TR this definition previously appeared in \cite[section 6]{ABDKS23}.
\begin{definition}\label{d:FG}
	Given $P\in\C[\xi_1]\cdots[\xi_n][[\hbar^{1/2}]]$ and $Q\in\C[\xi_1]\cdots[\xi_n]$ a non-degenerate quadratic form, define the formal Gaussian integrals
	\begin{equation}
		\int_{\mathrm{FG}}\cdots \int_{\mathrm{FG}} e^{Q(\xi_1,\dots,\xi_n)}P(\xi_1,\dots,\xi_n) d\xi_1\cdots d\xi_n \,,
	\end{equation}
	through the rules
	\begin{equation}
		\begin{split}
			\int_{\mathrm{FG}} e^{-\frac{a}{2}\xi_i^2}\xi_i^{m} d\xi_i \to [m\in 2\Z_{\geq 0}] (m-1)!!\sqrt{\frac{2\pi}{a^{m+1}}}\,,\qquad \int_{\mathrm{FG}} \int_{\mathrm{FG}} \xi_i^{m}\xi_j^{m'}e^{-\xi_i\xi_j} d\xi_i d\xi_j \to \delta_{m,m'} m!\,,
		\end{split}
	\end{equation}
	where for a statement $p$, $[p]$ is the Iverson bracket: it is $1$ if $p$ and $0$ if $\neg p$.
\end{definition}

Now review some of the properties of the preceding definition in the succeeding proposition.

\begin{proposition}\label{p:FGprop}
	As defined above formal Gaussian integration has the following properties:
	\begin{enumerate}
		\item the answer is invariant under a change of variables $\vec{\xi} \to A\vec{\xi} +\mathcal{O}(\hbar^{1/2})$ where $\vec{\xi} = (\xi_1,\dots,\xi_n)$, $A\in \mathrm{GL}_n(\C)$, and the $\mathcal{O}(\hbar^{1/2})$ portion must have polynomial dependence in the $\xi_i$ at each order in $\hbar^{1/2}$ (see \cite[section 6]{ABDKS23});
		\item when performing formal Gaussian integration over multiple variables the order of integration does not matter;
		\item integration by parts holds for each $\xi$
		\begin{equation}
			\int_{\mathrm{FG}} \left(\frac{d}{d\xi}f(\xi;\hbar)\right)g(\xi;\hbar) d\xi = -\int_{\mathrm{FG}} f(\xi;\hbar) \left(\frac{d}{d\xi}g(\xi;\hbar)\right) d\xi \,,
		\end{equation}
		where it is assumed $f$ and $g$ multiply to give the form of the integrand required by \cref{d:FG};
		\item if for two functions $f$ and $f^{\vee}$
		\begin{equation}
			f(w;\hbar) = \frac{i}{\sqrt{2\pi\hbar}} \int_{\mathrm{FG}} \exp \left( \frac{1}{\hbar}\left(x(w)(y(w)-y(z)) + \int_w^z xdy\right) \right) f^{\vee}(z;\hbar) y'(z) dz \,,
		\end{equation}
		then it follows that (this is \cite[proposition 3.9]{ABDKS23})
		\begin{equation}
			f^{\vee}(z;\hbar) = \frac{1}{\sqrt{2\pi\hbar}} \int_{\mathrm{FG}} \exp \left( -\frac{1}{\hbar}\left(y(z)(x(z)-x(w)) + \int_z^w ydx\right) \right) f(w;\hbar) x'(w) dw \,;
		\end{equation}
		\item if $P_N\in \C[\xi_1]\cdots[\xi_n][[\hbar^{1/2}]]$ and $Q_N\in\C[\xi_1]\cdots[\xi_n]$ are convergent sequences\footnote{Convergence in the space of $\hbar^{1/2}$ formal series means convergence in each order of $\hbar^{1/2}$ (i.e., the weak topology).} then
		\begin{equation}
			\lim\limits_{N\to\infty} \int_{\mathrm{FG}}\cdots \int_{\mathrm{FG}} e^{Q_N(\xi_1,\dots,\xi_n)}P_N(\xi_1,\dots,\xi_n) d\xi_1\cdots d\xi_n = \int_{\mathrm{FG}}\cdots \int_{\mathrm{FG}} e^{Q(\xi_1,\dots,\xi_n)}P(\xi_1,\dots,\xi_n) d\xi_1\cdots d\xi_n \,,
		\end{equation}
		where $P_N\to P$ and $Q_N\to Q$.
	\end{enumerate}
\end{proposition}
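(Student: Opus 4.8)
The unifying principle of my approach is that formal Gaussian integration is an algebraic device engineered to reproduce, coefficient by coefficient in $\hbar^{1/2}$, the moments of honest convergent Gaussian integrals; consequently each property listed is the formal shadow of a classical fact about $\int_{\R^n} e^{Q(\xi)}P(\xi)\,d\xi$. Accordingly, the plan is to reduce every claim to the two defining monomial rules, verify it by a direct computation on monomials (which span the polynomial algebra), and then extend by linearity and order-by-order in $\hbar^{1/2}$. Properties (1) and (4) are already recorded in \cite{ABDKS23} (section~6 and proposition~3.9 respectively), so I would either invoke these directly or reconstruct them as indicated below; the bulk of the self-contained work lies in (2), (3), and (5).

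I would dispatch property (3) first, since it is the cleanest. Writing $\frac{d}{d\xi}(fg) = \big(\frac{d}{d\xi}f\big)g + f\big(\frac{d}{d\xi}g\big)$, integration by parts is equivalent to the vanishing of the formal integral of a total derivative. After reducing the quadratic form to its canonical pieces it suffices to check, for the diagonal rule, that $\int_{\mathrm{FG}} e^{-\frac{a}{2}\xi^2}\big(m\,\xi^{m-1} - a\,\xi^{m+1}\big)\,d\xi = 0$ for every $m$; both terms vanish for $m$ even by parity, while for $m$ odd the double-factorial recursion $m!! = m\,(m-2)!!$ makes the two contributions cancel exactly. The analogous check for the coupling rule $\int_{\mathrm{FG}}\int_{\mathrm{FG}} \xi_i^m\xi_j^{m'} e^{-\xi_i\xi_j}\,d\xi_i\,d\xi_j \to \delta_{m,m'}\,m!$ proceeds identically. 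Property (2) then follows because, after a linear change of variables bringing $Q$ into a sum of canonical diagonal squares and off-diagonal couplings, the diagonal factors integrate independently and the coupling rule is symmetric in its two indices; hence the value is manifestly insensitive to the order of integration.

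For property (5) I would note that convergence in the weak topology means convergence in each fixed power of $\hbar^{1/2}$; at each such order both $P_N$ and $e^{Q_N}P_N$ contribute only finitely many monomials, and the formal integral of each monomial is a single explicit value from \cref{d:FG} depending continuously on the coefficients (through the explicit normalisations appearing there). The limit therefore commutes with the finite linear combination defining the integral at that order, which is precisely the asserted continuity. Finally, for property (1), I would split the change of variables $\vec{\xi}\to A\vec{\xi}+\mathcal{O}(\hbar^{1/2})$ into its linear part and its higher-order shift: the linear part reproduces the classical Jacobian $\det A$, which is already encoded in the normalisation of the defining rule, while the $\mathcal{O}(\hbar^{1/2})$ shift is handled by Taylor-expanding the integrand in $\hbar^{1/2}$ and summing the resulting series using property (5).

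The main obstacle, as in \cite{ABDKS23}, is property (1), and specifically the $\mathcal{O}(\hbar^{1/2})$ part of the substitution. One must ensure that expanding $\exp(Q(A\vec{\xi}+\mathcal{O}(\hbar^{1/2})))$ produces, at every fixed order in $\hbar^{1/2}$, a polynomial times the canonical Gaussian (so that \cref{d:FG} applies) and that the reorganised series genuinely converges in the weak topology; this is where the interplay between the formal parameter and the polynomiality requirement is most delicate, and it is the reason this property underpins the well-definedness used implicitly in (2). Property (4) is a formal Fourier/stationary-phase inversion: the two kernels are negatives of one another along the curve, so composing the two transforms returns a formal delta; I would establish it by citing \cite[proposition~3.9]{ABDKS23} rather than re-deriving the underlying bookkeeping.
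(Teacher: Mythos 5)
Your proposal follows essentially the same route as the paper for four of the five items: properties (1) and (4) are handled by citation to \cite[section 6]{ABDKS23} and \cite[proposition 3.9]{ABDKS23}, and properties (2) and (3) are reduced, via the change-of-variables invariance (1), to the canonical rules of \cref{d:FG}, with (3) coming down to the vanishing of formal integrals of total derivatives; your explicit double-factorial computation, and your check of the coupling rule $e^{-\xi_i\xi_j}$, simply flesh out what the paper compresses into ``by direct computation, $\int_{\mathrm{FG}} d\big(e^{-\frac{a}{2}\xi^2}\xi^m\big)=0$''. The one place where you genuinely diverge is property (5). The paper argues by induction on $n$: property (1) is used to complete the square so that $Q_N(\xi_1,\dots,\xi_n)+\tfrac12 a_N\xi_n^2$ is free of $\xi_n$ with $a_N$ a \emph{convergent} sequence, and then at each order in $\hbar^{1/2}$ the $\xi_n$-integral is a finite sum of the explicit diagonal moments $(m-1)!!\sqrt{2\pi/a_N^{m+1}}$, which are continuous in $a_N\neq 0$. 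You instead assert directly that ``the formal integral of each monomial is a single explicit value from \cref{d:FG} depending continuously on the coefficients'' of $Q_N$. As stated this is not quite an instance of \cref{d:FG}, whose rules only apply to quadratic forms already in canonical shape; for a general $Q_N$ one must first reduce to canonical form, and the continuity claim then requires that this reduction can be chosen compatibly (convergently) in $N$ --- which is precisely what the paper's induction with the convergent $a_N$ supplies. This is a gap of care rather than of substance, and your own remarks about diagonalisation in (2) contain the needed ingredient. One further caveat: if your sketched reconstruction of (1) from (5) were promoted to an actual proof it would be circular, since (5) --- in your treatment and in the paper's --- leans on (1); because you primarily cite (1), as the paper does, this is harmless.
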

\begin{proof}
	The first statements is known and appears in \cite{ABDKS23}, while the second statement follows from the first by diagonalising $Q$; the third statement also follows from the first (again diagonalise $Q$) and noting that, by direct computation 
	\begin{equation}
		\int_{\mathrm{FG}} d\left(e^{-\frac{a}{2}\xi^2}\xi^m\right) = 0\,.
	\end{equation}
	The fourth statement is \cite[proposition 3.9]{ABDKS23}. The fifth statement may be proven by induction $n$. For $n=1$, the statement follows essentially immediately from the fact that each power of $\hbar^{1/2}$ has only finitely many terms in $\xi_1$ and the first rule given in \cref{d:FG}. For the induction step, one may apply the first property to change variables so $Q_N(\xi_1,\dots,\xi_n)+\frac{1}{2}a_N\xi_n^2\in \C[\xi_1]\cdots[\xi_{n-1}]$ for a convergent sequence $a_N\in C$. The proof is then completed using the same logic on the integration over $\xi_n$ as was used in the induction beginning. 
\end{proof}

To illustrate how the formal Gaussian integration is used in practice, the following example is provided (see also \cite[equation (6.6)]{ABDKS23}).

\begin{example}\label{ex:Int}
	Examine the following formal Gaussian integral, where $f(w;\hbar)$ is a formal series in $\hbar$ with coefficients that are meromorphic functions in $w$
	\begin{equation}
		f^{\vee}(z;\hbar) \coloneq \frac{1}{\sqrt{2\pi\hbar}} \int_{\mathrm{FG}} \exp\left(-\frac{1}{\hbar}\left(y(z)(x(z)-x(w)) + \int_z^w ydx\right)\right)f(w;\hbar)x'(w) dw \,.
	\end{equation}
	Notice the saddle point occurs at $w = z$, so define the integration variable $\xi = (w-z)/\hbar^{1/2}$ (the $\hbar^{-1/2}$ factor is chosen so the quadratic, $\xi^2$, term in the exponential has no $\hbar$ dependence). Expanding around $\xi = 0$
	\begin{equation}
		-\frac{1}{\hbar}\left(y(z)(x(z)-x(w)) + \int_z^w ydx\right) = -\frac{x'(z)y'(z)}{2}\xi^2 -  g(z,\xi;\hbar)\,,
	\end{equation}
	where $g(z,\xi;\hbar) = \mathcal{O}(\xi^3)$ is given by
	\begin{equation}
		g(z,\xi;\hbar) = \sum_{m=2}^{\infty}\left(\frac{d^m}{dz^m}\big[y(z)x'(z)\big] - y(z)\frac{d^{m+1}}{dz^{m+1}}x(z)\right)\frac{\xi^{m+1}\hbar^{\frac{m-1}{2}}}{(m+1)!}\,.
	\end{equation}
	Then one can define coefficients $c_m$ through the following expansion
	\begin{equation}
		e^{-g(z,\xi;\hbar)}f(z+\hbar^{1/2}\xi;\hbar)x'(z+\hbar^{1/2}) = \sum_{m=0}^{\infty} c_m(z;\hbar) \xi^m \,,
	\end{equation}
	where, for fixed $z$, it follows that $c_m(z;\hbar)\in\C[[\hbar^{\frac{1}{2}}]]$. Finally, one can expand around $\xi=0$ and integrate term-by-term
	\begin{equation}
		f^{\vee}(z;\hbar) = \sum_{m=0}^{\infty} \frac{c_m(z;\hbar)}{\sqrt{2\pi}} \int_{\mathrm{FG}} e^{-\frac{1}{2}\frac{x'(z)y'(z)}{2}\xi^2} \xi^m d\xi \,.
	\end{equation}
	It is important to notice the following facts:
	\begin{itemize}
		\item $c_m(z;\hbar) = \mathcal{O}(\hbar^{m/2})$ as each derivative in $\xi$ comes with a factor of $\hbar^{1/2}$;
		\item from the above fact it follows that the coefficient of each power of $\hbar^{1/2}$ is polynomial in $\xi$ as required by \cref{d:FG} (in particular there is no issue with commuting the integral and sum over $m$);
		\item if $f(z;\hbar)$ only involves integer powers of $\hbar$ one concludes $f^{\vee}(z;\hbar)$ will also only have integer powers of $\hbar$ as all the odd $m$ in the above expansion integrate to zero, after observing that the $c_m$ with even $m$ can be viewed as coming from even number of derivatives with respect to $\xi$. Indeed, in this case $c_m(z;\hbar) \in \hbar^{m/2}\C[[\hbar]]$.
	\end{itemize}
	
	One can also consider other coordinate choices, for example $\xi = \log(w/z)/\hbar^{1/2}$. Here, the expansion around $\xi=0$ in the argument of the exponential reads
	\begin{equation}
		-\frac{1}{\hbar}\left(y(z)(x(z)-x(w)) + \int_z^w ydx\right) = -\frac{z^2x'(z)y'(z)}{2}\xi^2 -  \tilde{g}(z,\xi;\hbar)\,,
	\end{equation}
	where $\tilde{g} = \mathcal{O}(\hbar^{1/2}\xi^3)$ is a formal power series in $\hbar^{1/2}$. One can then define coefficients $\tilde{c}_m$ through
	\begin{equation}
		e^{-\tilde{g}(z,\xi;\hbar)}f(ze^{\hbar^{1/2}\xi};\hbar)x'(ze^{\hbar^{1/2}\xi})e^{\hbar^{1/2}\xi} = \sum_{m=0}^{\infty} \tilde{c}_m(z;\hbar) \xi^m \,,
	\end{equation}
	and integrate term-by-term in $m$ as before. 
\end{example}

Next, to a given spectral curve a certain kernel is associated. In the present work, the reason this kernel is of interest is because it is both intimately related to the wavefunction $\psi$ and known to transform nicely under the $x$-$y$ swap.

\begin{definition}
	Given a system of correlators $\omega_{g,n}$ define the \emph{Baker-Akhiezer kernel}\footnote{Strictly speaking, this should only be called the Baker-Akhiezer kernel when the $\omega_{g,n}$ posses the so-called KP-integrability property. Most genus zero curves posses this property. See \cite{ABDKS23,ABDKS24b} for details.}
	\begin{multline}
		K(z_1,z_2) = \frac{1}{x(z_1)-x(z_2)}\exp\left(\frac{1}{2!}\int_{z_2}^{z_1}\int_{z_2}^{z_1}\left(\omega_{0,2}(w_1,w_2)-\frac{dx(w_1)dx(w_2)}{\big(x(w_1)-x(w_2)\big)^2}\right)\right. \\
		\left. + \sum_{2g+n-2>0}\frac{\hbar^{2g+n-2}}{n!}\int_{z_2}^{z_1}\cdots\int_{z_2}^{z_1} \omega_{g,n}\right)\,.
	\end{multline}
	Define the \emph{dual Baker-Akhiezer kernel}, $K^{\vee}(z_1,z_2)$, in the obvious way by making the replacements $x\to x^{\vee}$ and $\omega_{g,n}\to \omega_{g,n}^{\vee}$ in the above formula.
\end{definition}

The following result is \cite[theorem 3.12]{ABDKS23}:\footnote{The sign conventions in \cite{ABDKS22,ABDKS23} differ from what is presented here; in particular the sign of $\omega_{0,1}$ is flipped whereas the sign of $\omega_{0,1}^{\vee}$ uses the same convention. By the last property listed in \cref{t:prop}, this swaps the arguments in the Baker-Akhiezer kernel.}

\begin{theorem}\label{t:BAkernel}
	Given a fully admissible spectral curve $\mc{S}$ it follows that
	\begin{multline}
		K^{\vee}(z_1,z_2) = \frac{-i}{2\pi\hbar}\int_{\mathrm{FG}}\int_{\mathrm{FG}} K(w_1,w_2) x'(w_1)x'(w_2) \exp\left(\frac{1}{\hbar}\left(y(z_1)(x(z_1)-x(w_1))+\int_{z_1}^{w_1}ydx\right)\right) \\ \times \exp\left(-\frac{1}{\hbar}\left(y(z_2)(x(z_2)-x(w_2))+\int_{z_2}^{w_2}ydx\right)\right) dw_1dw_2 \,.
	\end{multline}
\end{theorem}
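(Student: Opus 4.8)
This is \cite[theorem 3.12]{ABDKS23}, so strictly speaking the statement is imported rather than proved here; nonetheless the mechanism is worth recording, since it rests on the two tools already assembled above. The plan is to realise the asserted identity as the kernel-level shadow of the $x$-$y$ swap: the genuine input is the transformation of the correlators themselves (\cref{t:x-yswap}), and the formal Gaussian calculus (\cref{p:FGprop}) is the bookkeeping device that repackages that transformation as an integral transform on $K$.

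First I would expand $\log K^{\vee}(z_1,z_2)$ directly from the definition, obtaining the generating series of the dual correlators $\omega_{g,n}^{\vee}$ with all $n$ arguments integrated from $z_2$ to $z_1$, together with the unstable prefactor coming from $\omega_{0,1}^{\vee}$ and $\omega_{0,2}^{\vee}$. Into each stable term I would substitute \cref{t:x-yswap}, trading every $\omega_{g,n}^{\vee}$ for the universal polynomial $\mathrm{Expr}_{g,n}$ in the original $\omega_{g',n'}$ and the one- and two-point data $dx$, $dy$, $\omega_{0,2}$. The combinatorial skeleton of $\mathrm{Expr}_{g,n}$ is a sum over graphs whose vertices are original correlators and whose edges implement the swap; the crucial observation is that this graph sum is exactly a Wick expansion, so it resums into the double formal Gaussian integral $\int_{\mathrm{FG}}\int_{\mathrm{FG}}$, the two integration variables $w_1,w_2$ playing the role of the external legs attached to the swap kernel. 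The connected two-point pieces of the original system, which couple $w_1$ and $w_2$, reassemble precisely into the factor $K(w_1,w_2)$ on the right-hand side.

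It then remains to pin down the unstable data and the normalisation. Here $\omega_{0,1}^{\vee} = y^{\vee}dx^{\vee} = -x\,dy$ and $\omega_{0,2}^{\vee} = \omega_{0,2}$, and the point is to reconcile the resulting $\int x\,dy$ with the $\int y\,dx$ appearing in the stated exponentials. This is pure integration by parts: $\int_{z}^{w} x\,dy = x(w)y(w)-x(z)y(z)-\int_{z}^{w} y\,dx$, which converts $-x\,dy$ into the combinations $y(z_i)(x(z_i)-x(w_i))+\int_{z_i}^{w_i} y\,dx$ that multiply $1/\hbar$ in the two exponential factors. The Jacobians of the saddle-point coordinate change produce the measure $x'(w_1)x'(w_2)\,dw_1\,dw_2$, and the Gaussian normalisations attached to each of the two legs combine into the prefactor $-i/(2\pi\hbar)$; at this stage one matches leg $w_i$ against the single-variable transform of \cref{p:FGprop}(4) (the $z_2$-leg against the transform itself and the $z_1$-leg against its conjugate), with the residual signs accounted for by the convention differences flagged in the footnote to \cref{t:BAkernel}.

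The real obstacle is the resummation in the middle step: proving that the graph sum furnishing $\mathrm{Expr}_{g,n}$ is literally the Wick/Gaussian expectation, so that summing over $g$ and $n$ collapses to $\int_{\mathrm{FG}}\int_{\mathrm{FG}}$. This is the technical heart of \cite{ABDKS23} and is where all the work lives. A cleaner route that confines the difficulty to a single direction is to first establish the transform the opposite way---expressing $K$ as a double Gaussian integral against $K^{\vee}$---and then to invert it by applying \cref{p:FGprop}(4) once in each variable; property (4) is exactly the inversion statement that flips a transform of this shape, so once one direction is in hand the other follows formally, up to the explicit prefactor and the integration-by-parts rearrangement above.
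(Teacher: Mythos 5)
Your proposal is correct and takes essentially the same route as the paper: both treat the statement as imported, with the genuine input being that the LogTR correlators satisfy the correlator-level swap relation of \cref{t:x-yswap} (valid by \cite{ABDKS24}), so that the kernel identity follows from the detailed argument in \cite[theorem 4.3]{ABDKS23}. Your sketch of the Wick/Gaussian resummation and the integration-by-parts handling of the unstable terms is a faithful outline of what that cited proof does, not a genuinely different argument.
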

\begin{proof}
	This follows from the fact that the two systems of correlators $\omega_{g,n}$ and $\omega_{g,n}^{\vee}$ are related via \cref{t:x-yswap}. In fact, any two systems of correlators related by the same formula \cref{t:x-yswap} and having no poles on the diagonal (except for the double pole present in $\omega_{0,2}=\omega_{0,2}^{\vee}$) will satisfy this theorem. As the definition of the topological recursion used here, \cref{d:TR}, respects the $x$-$y$ swap relation \cite{ABDKS24}, the theorem holds. The interested reader should consult \cite[theorem 4.3]{ABDKS23} for the detailed proof.
\end{proof}

\begin{remark}
	In the preceding theorem the saddle points are at $w_i = z_i$. Various coordinate choices are possible, but it is assumed that the coordinate choice made leaves no explicit $\hbar$ dependence in the resulting quadratic form $Q(\xi_1,\xi_2)$ in the exponent, as required by \cref{d:FG}; this requirement ensures that each term in the corresponding expansion in $\hbar^{1/2}$ has only polynomial dependence in each of the $\xi_i$. For example, one such coordinate choice is $\xi_i = (w_i-z_i)/\hbar^{1/2}$. See \cref{ex:Int} for an instructive demonstration of the sort of formal Gaussian integration used in the above theorem.
\end{remark}

\section{The Laplace transform}\label{s:Lap}

Begin by defining what is meant by Laplace transform.

\begin{definition}\label{d:Lap}
	Let $\mc{S}$ be a spectral curve and $U,V\subset \Sigma$ be open sets such that $x$ and $y$ are holomorphic on $U$ and $V$. The Laplace transform is defined as a map
	\begin{equation}
		\begin{split}
		&\mathscr{L}_{\mc{S}} \, :\, e^{\frac{1}{\hbar}\C} e^{\frac{1}{\hbar}\int ydx}\mc{M}(V)[[\hbar]] \to e^{\frac{1}{\hbar}\C} e^{-\frac{1}{\hbar}\int xdy}\mc{M}(U)[[\hbar]] \,,\\
		&\mathscr{L}_{\mc{S}}\left[e^{\frac{1}{\hbar}C} e^{\frac{1}{\hbar}\int^w ydx} f(w;\hbar)\right](z) \coloneq \frac{e^{\frac{1}{\hbar}C}}{\sqrt{2\pi\hbar}} \int_{\mathrm{FG}} e^{\frac{1}{\hbar}\int^w ydx} e^{-\frac{x(w)y(z)}{\hbar}} f(w;\hbar) x'(w) dw \,,
		\end{split}
	\end{equation}
	where $w\in V$ and $z\in U$.
\end{definition}

The key properties of this Laplace transform in the context of the topological recursion and the $x$-$y$ swap duality are given by the following lemmata and theorem, which is the main theorem of the present work.

\begin{lemma}\label{l:LapInv}
	The Laplace transform as defined above has the inverse
	\begin{equation}
		\mathscr{L}^{-1}_{\mc{S}}\left[e^{\frac{1}{\hbar}C} e^{-\frac{1}{\hbar}\int^z xdy} f^{\vee}(z;\hbar)\right](w) = \frac{ie^{\frac{1}{\hbar}C}}{\sqrt{2\pi\hbar}} \int_{\mathrm{FG}} e^{-\frac{1}{\hbar}\int^z xdy} f^{\vee}(z;\hbar) y'(z) dz\,.
	\end{equation}	
\end{lemma}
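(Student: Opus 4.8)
The plan is to reduce the statement to the fourth property of \cref{p:FGprop}, which already asserts that a particular pair of formal Gaussian transforms are mutually inverse; the only real work is to recognise $\mathscr{L}_{\mc{S}}$ and its proposed inverse as exactly this pair, written in convenient variables. Since the scalar $e^{\frac{1}{\hbar}C}$ is independent of every integration variable, it passes through each formal Gaussian integral by linearity (\cref{d:FG}), so I would first reduce to $C=0$ and restore the constant at the end.

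The cleanest bookkeeping works in the ``full'' variables $F(w)\coloneq e^{\frac{1}{\hbar}\int^w y\,dx}f(w;\hbar)$ and $F^{\vee}(z)\coloneq e^{-\frac{1}{\hbar}\int^z x\,dy}f^{\vee}(z;\hbar)$, i.e. the honest elements of the source and target spaces of \cref{d:Lap}. In these variables $\mathscr{L}_{\mc{S}}$ becomes, directly from the definition, an integral against the cross kernel $e^{-x(w)y(z)/\hbar}$, namely $F^{\vee}(z) = \frac{1}{\sqrt{2\pi\hbar}}\int_{\mathrm{FG}} e^{-x(w)y(z)/\hbar}F(w)x'(w)\,dw$. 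I would then rewrite the proposed inverse the same way and collapse its exponent using the integration-by-parts identity $\int y\,dx + \int x\,dy = xy$ (valid on $U,V$, where $x,y$ are holomorphic) together with $\int_z^w y\,dx = \int^w y\,dx - \int^z y\,dx$; the accumulated exponent then telescopes down to the single cross term, exhibiting the inverse as an integral against the \emph{reciprocal} kernel $e^{+x(w)y(z)/\hbar}$. This cross term is exactly what supplies the $w$-dependence and, via its stationary point at $w=z$, makes the composite a reproducing kernel, so this is the step to watch most carefully.

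Once both transforms are in reciprocal-kernel form they coincide with the pair of formal Gaussian transforms in the fourth property of \cref{p:FGprop}, read in the full variables and in the sign convention of the swap-dual curve $\mc{S}^{\vee}$ that the by-parts collapse forces (it is precisely this collapse that converts the exponents $\frac{1}{\hbar}(x(w)(y(w)-y(z))+\int_w^z x\,dy)$ and $-\frac{1}{\hbar}(y(z)(x(z)-x(w))+\int_z^w y\,dx)$ of that proposition into $\pm x(w)y(z)/\hbar$). Mutual inversion is then immediate, giving both $\mathscr{L}_{\mc{S}}^{-1}\circ\mathscr{L}_{\mc{S}}=\mathrm{id}$ and $\mathscr{L}_{\mc{S}}\circ\mathscr{L}_{\mc{S}}^{-1}=\mathrm{id}$. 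Along the way I would note that the formal Gaussian hypotheses survive: the quadratic part of each exponent carries no explicit $\hbar$, so each order in $\hbar^{1/2}$ stays polynomial in the integration variable, and the prefactor extraction and any diagonalising change of variables are licensed by the first three properties of \cref{p:FGprop}.

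The main obstacle is entirely the sign and prefactor bookkeeping. Matching the exponents forces \cref{p:FGprop} to be applied in the convention of $\mc{S}^{\vee}$, effectively $y\mapsto -y$; this flips the sign inside the exponential and, through the measure $y'(z)\,dz\mapsto -y'(z)\,dz$, the sign of the accompanying factor of $i$. Keeping these two flips consistent so that the composite comes out as $+\mathrm{id}$ rather than $-\mathrm{id}$, and confirming that the cross term pins the stationary point at $w=z$, is the one genuinely delicate part of the argument; everything else is formal. As a cross-check one could instead substitute \cref{d:Lap} into the proposed inverse and evaluate the resulting double formal Gaussian integral by hand, but this would merely re-derive the fourth property of \cref{p:FGprop}, so the reduction is the economical route.
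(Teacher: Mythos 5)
Your proposal is correct and takes essentially the same approach as the paper: the paper's entire proof reads ``This is an immediate corollary of \cref{p:FGprop} (proposition 3.9 of ABDKS),'' i.e.\ exactly the reduction to the fourth property of \cref{p:FGprop} that you carry out. The details you supply---passing to the full variables, the integration-by-parts collapse of the exponents of \cref{p:FGprop} into the cross terms $\pm x(w)y(z)/\hbar$ (note that the kernel factor $e^{x(w)y(z)/\hbar}$ your argument correctly requires is in fact absent from the printed statement of the lemma, whose right-hand side otherwise has no $w$-dependence at all), and the dual-convention sign matching---are precisely what the paper compresses into the word ``immediate.''
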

\begin{proof}
	This is an immediate corollary of \cref{p:FGprop} (\cite[proposition 3.9]{ABDKS23}). 
\end{proof}

\begin{lemma}\label{l:LapParts}
	Let $\Phi(w;\hbar)$ be in the domain of $\ms{L}_{\mc{S}}$ and let $\Phi^{\vee}(z;\hbar)$ be its Laplace transform.
	\begin{equation}
		\begin{split}
			\ms{L}_{\mc{S}}\left[\hat{y}(w)^k \hat{x}(w)^l \Phi(w;\hbar)\right](z) &= \hat{x}^{\vee}(z)^k \hat{y}^{\vee}(z)^l \Phi^{\vee}(z;\hbar)\,, \\
			\ms{L}_{\mc{S}}^{-1}\left[\hat{y}^{\vee}(z)^k \hat{x}^{\vee}(z)^l \Phi^{\vee}(z;\hbar)\right](w) &=  \hat{x}(w)^k \hat{y}(w)^l \Phi(w;\hbar)\,,
		\end{split}
	\end{equation}
	where $\hat{x}(w) \coloneq x(w)\cdot$, $\hat{y}(w) \coloneq \hbar d/dx(w)$, $\hat{x}^{\vee}(z) \coloneq y(z)\cdot$, and $\hat{y}^{\vee}(z) \coloneq -\hbar d/dy(z)$.
\end{lemma}
\begin{proof}
	The $l=0$ identities follow immediately via integration by parts (\cref{p:FGprop}) and the definition of the Laplace transform. One can then derive the $l\neq 0$ identities using the fact that $\ms{L}_{\mc{S}} \circ \ms{L}_{\mc{S}}^{-1} = \ms{L}_{\mc{S}}^{-1} \circ \ms{L}_{\mc{S}}$ is the identity, by the previous lemma.
\end{proof}

\begin{theorem}\label{t:Lap}
	Let $\mc{S}$ be a fully admissible spectral curve. Assume there exists a point $b\in \Sigma$ such that both $dx(z)$ and $dy(z)$ have a pole at $b$. Define $\psi^{\vee}(z;b)$ in the obvious way by making the replacements $\omega_{g,n}\to\omega_{g,n}^{\vee}$ and $x \to x^{\vee}$ in \cref{d:wave}. Then
	\begin{equation}
			\psi^{\vee}(z;b) =  \mathscr{L}_{\mc{S}}[\psi(w;b)](z)\,,\qquad \psi(w;b) = \mathscr{L}_{\mc{S}}^{-1}[\psi^{\vee}(z;b)](w)\,,
	\end{equation}
	where the wavefunctions $\psi$ and $\psi^{\vee}$ may need to be multiplied by a constant from the form given in \cref{d:wave} so no constant of proportionality is introduced in the above formula.
\end{theorem}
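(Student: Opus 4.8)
The plan is to obtain the wavefunction identity as the base-point specialisation of the Baker-Akhiezer kernel transform \cref{t:BAkernel}. By \cref{l:LapInv} the map $\ms{L}_{\mc{S}}$ is invertible, so the two displayed equations are equivalent; I would prove the first, $\psi^{\vee}(z;b)=\ms{L}_{\mc{S}}[\psi(w;b)](z)$, the second (which should read $\psi(w;b)=\ms{L}_{\mc{S}}^{-1}[\psi^{\vee}(z;b)](w)$) then following by applying $\ms{L}_{\mc{S}}^{-1}$.

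First I would record the elementary bridge between wavefunction and kernel: setting the second argument of $K$ equal to $b$ and reinstating the $\omega_{0,1}$ contribution gives
\begin{equation}
	\psi(z;b)=\mathrm{const}\cdot e^{\frac{1}{\hbar}\int^{z} ydx}\big(x(z)-x(b)\big)K(z,b)\,,
\end{equation}
with the analogous identity for $\psi^{\vee},K^{\vee}$ under $x\mapsto y,\ y\mapsto -x$. The hypothesis that $b$ is a pole of both $dx$ and $dy$ enters here in two ways: by \cref{r:reg} it makes $b$ regular, so every stable integral $\int_b^{z}\omega_{g,n}$ converges, and the explicit prefactor $x(z)-x(b)$ cancels the simple pole of $K$ on the diagonal; all resulting $z$-independent divergences (from $x(b)$ and from $\int^{b}ydx$) are absorbed into the constants $e^{\frac1\hbar A}e^{B}$ already permitted by \cref{d:wave}.

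Next I would set $z_2=b$ in \cref{t:BAkernel}. Its left-hand side $K^{\vee}(z_1,b)$ is related to $\psi^{\vee}(z_1;b)$ by the dual bridge identity, and a clean consistency check is that the explicit $z_1$-factor $\exp\big(\tfrac1\hbar(y(z_1)x(z_1)-\int^{z_1}ydx)\big)$ produced by the transform equals $e^{+\frac1\hbar\int^{z_1}xdy}$ after integration by parts, cancelling exactly the $(0,1)$-factor $e^{-\frac1\hbar\int^{z_1}xdy}$ carried by $\psi^{\vee}$. On the right-hand side the double formal Gaussian integral must be collapsed to a single one: the $w_2$-exponent is stationary at $w_2=b$, and performing that integration against $K(w_1,w_2)x'(w_2)$ should replace the two-point kernel by the stable part $\tilde\psi(w_1)$ of $\psi$ and cancel the diagonal prefactor $(x(w_1)-x(w_2))^{-1}$, leaving a $w_1$-integral times a $w_1$-independent factor. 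What survives is precisely
\begin{equation}
	\frac{1}{\sqrt{2\pi\hbar}}\int_{\mathrm{FG}} e^{\frac{1}{\hbar}\int^{w_1} ydx}\, e^{-\frac{x(w_1)y(z_1)}{\hbar}}\, \tilde\psi(w_1;\hbar)\, x'(w_1)\, dw_1\,,
\end{equation}
which is $\ms{L}_{\mc{S}}[\psi(w;b)](z_1)$ by \cref{d:Lap}; collecting the surviving $z_1$-independent prefactors and regularisation terms into $e^{\frac1\hbar C}$ completes the identification.

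The main obstacle is this collapse of the double Gaussian integral to a single one, i.e. showing that the $w_2$-integration at $z_2=b$ decouples, cancels the diagonal prefactor, and contributes only a $z_1$-independent factor. The delicacy is that $b$ is a pole of both $dx$ and $dy$, so $x(b)$ and $y(b)$ are infinite and the $w_2$-exponent, together with the $(y(z_1)-y(b))$ normalisation coming from the dual bridge, must be treated as a single regularised $z_2\to b$ limit rather than by naive substitution; one must also verify that the stationary-phase localisation at $w_2=b$ is exact to all orders in $\hbar^{1/2}$, not merely to leading order. I would control this term by term in $\hbar^{1/2}$ using the change-of-variables invariance and the continuity of formal Gaussian integration, \cref{p:FGprop}(1) and (5).
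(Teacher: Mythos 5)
Your strategy is the same as the paper's: specialise \cref{t:BAkernel} at $z_2 \to b$, relate the kernels to the wavefunctions via the limit that absorbs the divergent prefactors, and collapse the $w_2$-integral so that only its leading Taylor coefficient survives. However, the step you yourself flag as ``the main obstacle'' is the entire substance of the paper's proof, and the tools you propose for it do not resolve it. Writing the $w_2$-integrand as $\sum_m c_m(w_1,z_2;\hbar)\,\xi^m$ with $\xi=(w_2-z_2)/\hbar^{1/2}$, the Gaussian moments produce factors $\big(x'(z_2)y'(z_2)\big)^{-m/2}$, while each $\xi$-derivative defining $c_m$ can raise the order of the pole at $z_2=b$ by one. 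Localisation at $c_0$ therefore rests on a quantitative competition: the moment decay must beat the pole growth of the $c_m$ in the limit $z_2\to b$. This works when at least one of $dx$, $dy$ has a pole of order at least two at $b$, since then $\big(x'(z_2)y'(z_2)\big)^{-1/2}$ vanishes faster than a simple pole can grow. But when both $dx$ and $dy$ have \emph{simple} poles at $b$, the competition is exactly marginal ($c_m = O(z_2^{-m})$ against $O(z_2^{m})$ from the moments), and the argument you sketch --- change-of-variables invariance plus continuity, i.e.\ items (1) and (5) of \cref{p:FGprop} --- gives no conclusion: those properties govern the formal Gaussian integration at fixed $z_2$ and say nothing about the $z_2\to b$ limit.

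The paper handles this marginal case by a separate and substantially longer argument: it moves $b$ to $0$, switches to the logarithmic variable $\xi=\hbar^{-1/2}\log(w_2/z_2)$, so that $d/d\xi = w_2\,d/dw_2$ kills the marginal growth, and then verifies factor by factor (the $dx(w_2)/d\xi$ term, the stable-correlator integrals, the exponent --- whose Hessian contains a logarithmic divergence that cancels only because $x$ and $y$ both behave like $\alpha\log w_2$ at $b$ --- the $\omega_{0,2}$ cross terms, and the diagonal prefactor) that the integrand has neither zero nor pole at $w_2=0$, including a completion-of-square trick to absorb a stray factor $e^{-\hbar^{1/2}\xi/2}$. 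This is not a removable technicality: for the paper's main application, \cref{t:QC}, one has $y=\log z$ and $x$ a sum of logarithms, so both $dx$ and $dy$ have simple poles at $b=0,\infty$ and only the marginal case is relevant. (A smaller inaccuracy: the diagonal prefactor $1/(x(w_1)-x(w_2))$ is not cancelled by the $w_2$-integration, as you assert, but by the $(0,2)$ cross terms in the kernel, which produce exactly $x(w_1)-x(w_2)$ in the numerator.) In short, your proposal has the right skeleton and correctly locates the difficulty, but the mechanism that actually makes the localisation true --- and the case split it forces --- is missing.
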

\begin{proof}
	The essential idea of the proof is as follows. One starts with the statement of \cref{t:BAkernel}; a new integration coordinate $\xi = \xi(w_2,z_2)$ where $\xi(w_2=z_2,z_2) = 0$ is the saddle point in the integration over $w_2$ is then defined. Next, one expands around $\xi = 0$ as explained in \cref{ex:Int}. The $\xi = 0$ term (this term amounts to evaluating the integrand at $w_2=z_2$) will, up to a constant of proportionality, give the Laplace transform of the wavefunction (or its dual, depending on whether one starts from the Laplace transform or its inverse) in the limit $z_2\to b$. Finally, one shows the rest of the terms in the expansion about $\xi=0$ vanish in the limit $z_2 \to b$, which yields the theorem. This program is executed in two cases, where two different choices of the coordinate $\xi$ are required (these two choices of $\xi$ are the ones illustrated in \cref{ex:Int}).
	
	\medskip
	\noindent
	\emph{Case I: one or both of $dx$ and $dy$ have a non-simple pole at $b$.}\par

	If $b=\infty$ in the affine coordinate $z$ redefine the coordinate so $b\neq\infty$. Observe that $b$ must be a regular point (see \cref{r:reg}). Notice that, by definition
	\begin{equation}
		K(z_1,z_2) = \frac{e^{-\frac{1}{\hbar}\int^{z_1}_{z_2} \omega_{0,1}}\psi(z_1,z_2)}{x(z_1)-x(z_2)} \,.
	\end{equation}
	Define $\phi(z_2) \coloneq -x(z_2) e^{-\frac{1}{\hbar}\int^{z_2}_* \omega_{0,1}}$ where $*$ is chosen so that
	\begin{equation}
		\psi(z;b) = \lim\limits_{z_2 \to b} e^{\frac{1}{\hbar}\int^{z}_{z_2} \omega_{0,1}} \frac{K(z,z_2)(x(z)-x(z_2))}{\phi(z_2)} \,.
	\end{equation}
	Similarly, define $\phi^{\vee}$ in an analogous manner. Now, from the statement of \cref{t:BAkernel}, and defining a new integration variable through $\xi = (w_2-z_2)/\sqrt{h}$
	\begin{equation}\label{e:CaseIComp}
		\begin{split}
			\psi^{\vee}(z;b) = & \lim\limits_{z_2\to b} \frac{-i}{2\pi\hbar} \int_{\mathrm{FG}} \int_{\mathrm{FG}} K(w,w_2) x'(w) x'(w_2) \frac{y(z_2)-y(z)}{\phi^{\vee}(z_2)} \exp\left(\frac{1}{\hbar}\int_{w_2}^{w}ydx\right) \\ & \qquad \qquad \times \exp\left(\frac{1}{\hbar}\big(y(z_2)x(w_2)-y(z)x(w)\big)\right) dw dw_2 \\
			= & \lim\limits_{z_2\to b} A(z_2)e^{\frac{1}{\hbar}B(z_2)} \frac{1}{\sqrt{2\pi\hbar}} \int e^{-\frac{y(z)x(w)}{\hbar}} \psi(w;z_2)  x'(w) \\ 
			& \qquad \qquad \times \sqrt{\frac{x'(z_2)y'(z_2)}{2\pi}} \sum_{m=0}^{\infty} \int_{\mathrm{FG}} e^{-\frac{x'(z_2)y'(z_2)}{2}\xi^2} c_m(w,z_2;\hbar) \xi^m d\xi dw \,,
		\end{split}
	\end{equation}
	where $c_0 = 1$\footnote{The reader should beware that the $c_m$ in \cref{ex:Int} were not normalised this way.} and $A,B$ have no $\hbar$ dependence. In the above computation the fact that $b$ is a pole of $dx$ and $dy$ has been used to establish $y(z_2)-y(z) \sim y(z_2)$ and $(x(w_2)-x(w))|_{w_2=z_2}\sim x(z_2)$ as $z_2\to b$.
	
	Now, each $c_m$ is computed by taking one further $\xi$ derivative versus the computation for $c_{m-1}$ and evaluating at $\xi = 0$ (equivalently, $w_2=z_2$). Each such derivative will increase the order of the pole in the integrand at $z_2=b$ by at most one. However, each increasing power of $\xi$, after integration, gives a factor of $(x'(z_2)y'(z_2))^{-1/2}$, which goes to zero, and as either $dx$ or $dy$ has a non-simple pole at $b$ each succeeding term in the series over $m$ must go to zero faster than the preceding one.
	
	Now recall that each $c_m$ is a power series in $\hbar^{1/2}$ with leading order $\hbar^{m/2}$. Thus, at each order in $\hbar^{1/2}$, only finitely many terms in the sum over $m$ contribute. Therefore, there is no issue with commuting the limit as $z_2\to b$ with the sum over $m$.
	
	Next, as the LHS is by definition finite the RHS must also be finite. The $A(z_2)$ factor may therefore by absorbed into the regularisation of the $(g,n)=(0,2)$ term in $\psi$ as $z_2\to b$ and similarly for $B(z_2)$ with the regularisation of the $(g,n) = (0,1)$ term, to give an overall answer that is finite as $z_2\to b$. This gives the formula for $\psi^{\vee}$ in terms of the Laplace transform of $\psi$. The second equality immediately follows by \cref{p:FGprop}.
	
	\medskip
	\noindent
	\emph{Case II: both of $dx$ and $dy$ have a simple pole at $b$.}\par
	
	In this case, change the affine coordinate so that $b=0$ and $z_2=\infty$ is a regular point which is neither a zero or a pole of $dx$ or $dy$. Define the integration variable $\xi = \hbar^{-1/2}\log(w_2/z_2)$ and observe (the computation is entirely analogous to what was done in \cref{e:CaseIComp} where an additional step was shown)
	\begin{equation}\label{e:CaseIIComp}
		\begin{split}
			\psi^{\vee}(z;b) & = \lim\limits_{z_2\to b} \tilde{A}(z_2)e^{\frac{1}{\hbar}\tilde{B}(z_2)} \frac{1}{\sqrt{2\pi\hbar}} \int e^{-\frac{y(z)x(w)}{\hbar}} \psi(w,z_2)  x'(w)  \int_{\mathrm{FG}} e^{-\frac{z_2^2x'(z_2)y'(z_2)}{2}\xi^2} \Xi(\xi,z_2,w;\hbar) d\xi dw \,,
		\end{split}
	\end{equation}
	where $\Xi(\xi,z_2,w;\hbar)$ is a formal series in $\hbar^{1/2}$. Notice that
	\begin{equation}
		\frac{d}{d\xi} = w_2\frac{d}{dw_2}\,,
	\end{equation}
	so, for a function $f(w_2;\hbar)$, if $f$ has neither a zero nor a pole at $w_2=0$ then $d^mf/d\xi^m|_{w_2=z_2}$ will have a zero at $z_2 = 0$ for all $m\geq 1$. Thus, if it is established that $\Xi$ has neither a zero nor a pole at $w_2=0$, then the leading order term in the expansion about $\xi=0$ will be the only one that contributes in the limit $z_2\to b=0$ by similar logic as in case I\footnote{The perspicacious reader will notice this is not quite correct. Expressions whose poles in the $w_2$-plane depend on $z_2$, such as $1/(w_2/z_2+1)$, but have no pole or zero at $z_2=0$ after substituting $w_2\to z_2$, may not behave under differentiation in the claimed manner. Fortunately, the five different contributing factors do not present these issues. For all factors save the integrals of the stable correlators, this will be clear through explicit computation, and for the stable correlators one notes that the properties listed in \cref{t:prop} will imply their integrals are meromorphic functions such that the location of their poles in the $w_2$ plane do not depend on the value of $z_2$.} (observe that the coefficient of $-\xi^2/2$ in the argument of the exponential in the integral over $\xi$ in \eqref{e:CaseIIComp}, $z_2^2x'(z_2)y'(z_2)$, is finite and non-zero as $z_2\to 0$).
	
	Schematically, the factors contributing to $\Xi$ are:
	\begin{enumerate}
		\item $dx(w_2)/d\xi$;
		\item $\exp\left(\text{integrals of stable correlators}\right)$;
		\item $\exp\left(\frac{1}{\hbar}[y(z_2)x(w_2) - \int^{w_2} ydx + \frac{1}{2}z_2^2y'(z_2)x'(z_2)\xi^2 -\tilde{B}(z_2)]\right)$;
		\item $\exp\left(\frac{1}{2} \int \int [\omega_{0,2} - x_1^*x_2^*\omega_{0,2}]\right)$, where $x_i^*$ is the pullback in the $i^\text{th}$ variable;
		\item $1/(x(w_1)-x(w_2))$.
	\end{enumerate}
	
	The first factor is finite and non-zero at $w_2=0$ as $dx$ has a simple pole at $w_2=0$ by assumption. The argument of the exponential in the second factor is finite at $w_2=0$ as $b$ is a regular point; the exponential of a finite number is a finite and non-zero number. For the third factor, observe that the second derivative of the exponent with respect to $\xi$ is
	\begin{equation}
		-w_2^2y(w_2)x''(w_2)+w_2^2y(z_2)x''(w_2)-w_2^2 x'(w_2)y'(w_2) +z_2^2x'(z_2)y'(z_2) - w_2y(w_2)x'(w_2)+w_2y(z_2)x'(w_2) \,.
	\end{equation}
	Noting that, as $w_2\to 0$, $x(w_2) \sim \alpha_0^{\vee}\log(w_2) + \mathcal{O}(1)$ and $y(w_2)\sim \alpha_0 \log(w_2)+ \mathcal{O}(1)$ for some $\alpha_0^{\vee},\alpha_0\in\C^*$, one sees that the apparent logarithmic divergence in the above expression as $w_2\to 0$ actually cancels. Thus, the exponential of this expression has neither a zero nor a pole at $w_2=0$, as hoped for.
	
	Turn now to the contribution from $\omega_{0,2}$. Write out this factor explicitly
	\begin{multline}
		\exp\left(\frac{1}{2}\left(\int_{w_2}^{w}\int_{w_2}^{w}-\int_{z_2}^{w}\int_{z_2}^{w}\right)\frac{dt_1dt_2}{(t_1-t_2)^2} - \frac{dx(t_2)dx(t_2)}{(x(t_1)-x(t_2))^2}\right) = \sqrt{\frac{x'(z_2)}{x'(w_2)}}\frac{(x(w)-x(w_2))(w-z_2)}{(w-w_2)(x(w)-x(z_2))}\,.
	\end{multline}
	The $x(w)-x(w_2)$ in the numerator precisely cancels the fifth factor in the list. The $(w-w_2)^{-1}$ factor is of the desired form. However, the $x'(w_2)^{-1/2}$ is clearly not of the desired form. To remedy this situation one can perform the trivial rewriting
	\begin{equation}
		\sqrt{\frac{1}{x'(w_2)}} = \sqrt{\frac{e^{\hbar^{1/2}\xi}}{x'(w_2)}}e^{-\frac{1}{2}\hbar^{1/2}\xi}\,.
	\end{equation}
	Then, by the definition of the coordinate $\xi$ the $\sqrt{e^{\hbar^{1/2}\xi}/x'(w_2)}$ factor is of the desired form. The $e^{-\frac{1}{2}\hbar^{1/2}\xi}$ can be accounted for by shifting the coordinate $\xi$ by $\hbar^{1/2}/[2z_2^2x'(z_2)y'(z_2)]$ so it is absorbed into the quadratic term in the exponential (i.e., complete the square). This introduces a factor $a(\hbar)$ in the relation between $w_2$ and $\xi$ so $w_2 = a(\hbar^{1/2})z_2e^{\hbar^{1/2}\xi}$ with $a(0)=1$. However, $z_2\to 0$ still means $w_2\to 0$ in these new coordinates so this won't affect any of the previous arguments. This also multiplies the resulting wavefunction by a constant, but this constant is finite as $z_2^2x'(z_2)y'(z_2)$ is non-zero and finite in the limit $z_2 \to b$.
	
	Thus, the leading order coefficient in the expansion of $\Xi$ about $\xi=0$ is indeed of higher order as $z_2\to 0$ than the rest of the coefficients. From this point on, the proof of case II proceeds, \textit{mutatis mutandis}, as in case I.
\end{proof}

\begin{remark}
	In \cite{H23b} it was noted that the wavefunction $\psi$ and the dual wavefunction $\psi^{\vee}$ should, intuitively, be Laplace transforms of one another, and multiple affirmative examples were provided to support this intuition. The above theorem translates this into a rigorous result.
\end{remark}

\begin{remark}\label{r:Highg}
	For higher genus curves, the above proof should hold for what is called the `perturbative wavefunction'. However, in these cases, the actual wavefunction requires nonpeturbative corrections \cite{EGMO21}. It would be interesting to examine in more detail how the above theorem works in higher genus.
\end{remark}

\section{Quantum curves}\label{s:qc}

\subsection{General consequences of the Laplace transform}\label{ss:gen}

The most obvious application of the Laplace transform is in the derivation of quantum curves. Indeed, assume that one has a fully admissible spectral curve $\mc{S}$ and is in possession of a dual quantum curve (i.e., a quantum curve for $\mc{S}^{\vee}$) $\hat{P}^{\vee}$ so that 
\begin{equation}\label{e:QCex}
	\hat{P}^{\vee}(\hat{x}^{\vee},\hat{y}^{\vee};\hbar)\psi^{\vee}(z;b) = 0\,, 
\end{equation}
where $\hat{y}^{\vee} \coloneq -\hbar d/dy$ and $\hat{x}^{\vee} \coloneq y\cdot$. Furthermore, assume that $b$ happens to be a pole of $dx$ and $dy$. Then, using \cref{l:LapParts}, one can immediately get a quantum curve for $\mc{S}$ by taking the inverse Laplace transform of \cref{e:QCex}. Of course, in complete generality, finding the quantum curve for $\mc{S}^{\vee}$ is no more or less challenging than finding the quantum curve for $\mc{S}$. But if $y$ happens to be unramified, the recursive structure of topological recursion collapses, and one can compute the dual wavefunction $\psi^{\vee}$ explicitly. Then, a differential equation for $\psi^{\vee}$ can be constructed and the Laplace transform taken. This approach is illustrated in the following proposition.

\begin{proposition}
	Let $\mc{S} = \left(\P^1,\, x(z) = P_1(z)/P_2(z),\, y(z) = z ,\,\omega_{0,2}(z_1,z_2) = dz_1dz_2/(z_1-z_2)^2 \right)$ be a fully admissible spectral curve where $P_1$ and $P_2$ are coprime polynomials such that $\deg P_1>\deg P_2$. It follows that
	\begin{equation}
		\big[P_1(\hat{y})-P_2(\hat{y})\hat{x}\big]\psi(z;\infty) = 0\,.
	\end{equation} 
\end{proposition}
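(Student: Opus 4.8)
The plan is to move to the $x$-$y$ swap dual curve $\mc{S}^\vee$, where the recursion collapses because $x^\vee = y$ is unramified, read off the dual quantum curve almost by inspection, and then transport it back to $\mc{S}$ via the inverse Laplace transform of \cref{t:Lap}, using \cref{l:LapParts} to rewrite the dual operators in terms of $\hat x$ and $\hat y$.

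First I would examine $\mc{S}^\vee = \left(\P^1,\,x^\vee = y = z,\,y^\vee = -x = -P_1/P_2,\,\omega_{0,2}\right)$. Since $x^\vee = z$ the differential $dx^\vee = dz$ has no zeros, so $\mc{S}^\vee$ has empty ramification locus and the entire second line of \cref{d:TR} is absent. Likewise $dy^\vee = -d(P_1/P_2)$ is the exact differential of a rational function, hence has vanishing residues and no simple poles, so the first (LogTR) line of \cref{d:TR} is also absent. An induction on $2g+n-2$ then shows that every stable correlator of $\mc{S}^\vee$ vanishes; and because $x^\vee = z$ gives $\omega_{0,2}^\vee = dx^\vee(z_1)\,dx^\vee(z_2)/(x^\vee(z_1)-x^\vee(z_2))^2$, the $(0,2)$-contribution to the dual wavefunction cancels identically. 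Hence, up to the constant allowed by \cref{d:wave},
\[
	\psi^\vee(z;\infty) = \exp\!\left(\frac{1}{\hbar}\int_\infty^z y^\vee\,dx^\vee\right),
\]
which is a pure exponential and therefore satisfies $\hbar\,d\psi^\vee/dx^\vee = y^\vee\psi^\vee$ exactly. Multiplying by $P_2(x^\vee)$ and inserting $y^\vee = -P_1(x^\vee)/P_2(x^\vee)$ yields the dual quantum curve
\[
	\left[P_2(\hat x^\vee)\,\hat y^\vee + P_1(\hat x^\vee)\right]\psi^\vee(z;\infty) = 0 .
\]

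Next I would apply \cref{t:Lap}. The hypothesis $\deg P_1 > \deg P_2$ forces $x$, and hence $dx$, to have a pole at $z=\infty$, while $dy = dz$ has a pole there as well, so $b=\infty$ is a common pole of $dx$ and $dy$ and \cref{t:Lap} gives $\psi(z;\infty) = \ms{L}_{\mc{S}}^{-1}[\psi^\vee(z;\infty)]$. Applying $\ms{L}_{\mc{S}}^{-1}$ to the dual quantum curve and translating each factor with \cref{l:LapParts} — the derivative $\hat y^\vee = \hbar\,d/dx^\vee$ is sent to $-\hat x$, and multiplication by $x^\vee$ is sent to $\hat y = \hbar\,d/dx$ — turns $P_2(\hat x^\vee)\hat y^\vee + P_1(\hat x^\vee)$ into $P_2(\hat y)(-\hat x) + P_1(\hat y)$. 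Since conjugation by $\ms{L}_{\mc{S}}$ is multiplicative it preserves operator ordering, so the factors of $\hat x^\vee$ standing to the left of $\hat y^\vee$ become factors of $\hat y$ standing to the left of $\hat x$, and one obtains exactly
\[
	\left[P_1(\hat y) - P_2(\hat y)\,\hat x\right]\psi(z;\infty) = 0 .
\]

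I expect the operator translation to be the main obstacle. Establishing the dictionary $\hat y^\vee \mapsto -\hat x$, $x^\vee \mapsto \hat y$ and checking that \cref{l:LapParts} indeed transports the product in the required order is the heart of the argument; the vanishing of the stable $\omega_{g,n}^\vee$ is comparatively clean. A secondary point that must be handled carefully is the bookkeeping of the regularising prefactors $e^{A/\hbar}e^{B}$ attached to the unstable terms of $\psi$ and $\psi^\vee$: one must use the normalisation clause of \cref{t:Lap} to ensure these cancel so that no spurious constant multiplies the final relation. The remaining computations — the WKB identity for $\psi^\vee$ and the term-by-term transform — are routine.
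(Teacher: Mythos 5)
Your proposal is correct and follows essentially the same route as the paper's own (much terser) proof: the paper likewise observes that the dual wavefunction satisfies $\left[P_1(\hat{x}^{\vee})-P_2(\hat{x}^{\vee})\hat{y}^{\vee}\right]\psi^{\vee}(z;\infty)=0$ (with $\hat{x}^{\vee} = y\,\cdot$ and $\hat{y}^{\vee} = -\hbar\, d/dy$, matching your conventions up to sign bookkeeping) and then takes the Laplace transform. The details you supply — the emptiness of both the ramification locus and the set of simple poles of $dy^{\vee}$ forcing all stable dual correlators to vanish, the cancellation of the $(0,2)$ term, the check that $b=\infty$ is a common pole of $dx$ and $dy$ so \cref{t:Lap} applies, and the order-preserving operator dictionary — are exactly the steps the paper leaves implicit.
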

\begin{proof}
	Notice that $\psi^{\vee}(z;\infty)$ satisfies
	\begin{equation}
		\left[P_1(\hat{x}^{\vee})-P_2(\hat{x}^{\vee})\hat{y}^{\vee}\right]\psi^{\vee}(z;\infty)=0\,,
	\end{equation}
	where $\hat{y}^{\vee} \coloneq -\hbar d/dy$ and $\hat{x}^{\vee} \coloneq y\cdot$. Taking the Laplace transform yields the desired result.
\end{proof}

Although this is a very slick, and new, derivation of the quantum curve, systematic ways to derive quantum curves with $y=z$ are known for any meromorphic $x$ \cite{BE17} and even in many cases when $x$ is not meromorphic \cite{BKW23}. It should be noted, however, that for appropriate basepoints $b$, this new method is easier. What is more interesting is the case $y=\log z$.\footnote{Up to changing coordinates, these are the only two cases of injective, and therefore unramified, $y$ where $dy$ is meromorphic.}

\begin{remark}
	In a sense the results here provide a very intuitive explanation for why quantum curves with simple $y(z)$ have easy systematic processes to derive quantum curves with few $\hbar$ corrections \cite{BE17}, whereas quantum curves with more complicated $y$ require more sophisticated methods to derive quantum curves with many $\hbar$ corrections \cite{EGMO21}.
\end{remark}

Of particular interest amongst the curves with $y=\log z$ are those of the form $P_2(e^y)e^x-P_1(e^y) = 0$ where $P_1$, $P_2$ are coprime polynomials. These curves have relations to mirror symmetry \cite{BKMP08,BM08,GS11,BS12,Z12,DOSS14,EO15,FLZ17,FLZ20} and knot theory \cite{ABM12,BE12,GJKS14,DPSS19}. In \cite{GS11} it was conjectured that the dual quantum curve takes the form
\begin{equation}
	\left[P_2(e^{\hat{x}^{\vee}-\hbar/2})e^{\hat{y}^{\vee}+\hbar/2}-P_1(e^{\hat{x}^{\vee}-\hbar/2})\right] \psi^{\vee}(z;b) = 0 \,.
\end{equation}

It was argued by Hock \cite{H23b} that such a quantum curve should kill the dual wavefunction as a motivation for defining a modified version of the topological recursion, which would become LogTR in \cite{ABDKS24} (see \cref{r:LogTR}). Hock's original proposal indeed aligns with LogTR in many cases \cite[corollary 3.8]{ABDKS24}. Taking the LogTR as given, one can repeat a similar argument to that of Hock, and then take the Laplace transform to obtain a quantum curve for the original spectral curve. This is the content of the following theorem.

\begin{theorem}\label{t:QC}
	Given an admissible spectral curve $\mc{S}$ where $x$ and $y$ satisfy a relation of the form $P_2(e^y)e^x-P_1(e^y) = 0$ such that $P_1$ and $P_2$ are coprime polynomials satisfying:
	\begin{enumerate}
		\item $\deg P_1 \neq \deg P_2$;
		\item $\deg \left[z^{\deg P_1}P_1(z^{-1})\right] \neq \deg \left[z^{\deg P_2}P_2(z^{-1})\right]$;
		\item $P_1/P_2$ has only simple zeros and poles on $\C^*$,
	\end{enumerate}
	it follows that there exists a quantum curve for the basepoint $e^{y(b)} = 0,\infty$
	\begin{equation}
		\left[P_2(e^{\hat{y}-\hbar/2})e^{\hat{x}+\hbar/2}-P_1(e^{\hat{y}-\hbar/2})\right] \psi(z;b) = 0 \,.
	\end{equation}
\end{theorem}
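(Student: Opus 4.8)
The strategy is to exploit the $x$-$y$ swap: on the dual curve the topological recursion degenerates, so the dual wavefunction $\psi^\vee$ can be written down explicitly and annihilated by an obvious $q$-difference operator, which I then transport back to $\mc{S}$ through \cref{t:Lap,l:LapParts}. First I would choose the coordinate $z$ on $\Sigma\cong\P^1$ so that $e^{y}=z$; then $y=\log z$ and $x=\log\!\big(P_1(z)/P_2(z)\big)$, so that $dy=dz/z$ and $dx=\big(P_1'/P_1-P_2'/P_2\big)\,dz$ are meromorphic. The crucial observation is that $dx^\vee=dy=dz/z$ has no zeros on $\C^*$, so the dual curve has \emph{no} ramification points; its only topological-recursion data are the simple poles of $dy^\vee=-dx$, which sit at the (by hypothesis (3), simple) zeros of $P_1$ and $P_2$. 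Here hypotheses (1)--(3) are used to check that $\mc{S}^\vee$ is admissible (vacuously, since $R^\vee=\emptyset$), so that the given admissible $\mc{S}$ is in fact fully admissible, and --- as required to apply \cref{t:Lap} --- that at the basepoint $b$ (where $e^{y(b)}\in\{0,\infty\}$) both $dx$ and $dy$ have poles: $dy=dz/z$ has poles at $0$ and $\infty$, and the numerical hypotheses (1)--(2) ensure the same for $dx$.

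Since $R^\vee=\emptyset$, the entire second line of \cref{d:TR} applied to $\mc{S}^\vee$ vanishes, so $\omega^\vee_{g,n}=0$ for all $n\geq2$ and each $\omega^\vee_{g,1}$ is produced \emph{only} by the LogTR ($\csch$) term summed over the simple poles of $dy^\vee$. Consequently the exponent of $\psi^\vee$ collapses to the unstable $(0,1)$ and $(0,2)$ contributions plus the single sum $\sum_{g\geq1}\hbar^{2g-1}\int_b^z\omega^\vee_{g,1}$, which can be evaluated in closed form; because $\log(P_1/P_2)$ splits over the roots of $P_1,P_2$, the resulting $\psi^\vee$ factorises into one quantum-dilogarithm-type factor per root.

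Following Hock's argument \cite{H23b} (and in line with \cite[corollary 3.8]{ABDKS24}), I would then verify directly that this explicit $\psi^\vee$ is annihilated by
\begin{equation*}
	\hat P^\vee=P_2\!\big(e^{\hat x^\vee-\hbar/2}\big)\,e^{\hat y^\vee+\hbar/2}-P_1\!\big(e^{\hat x^\vee-\hbar/2}\big),\qquad \hat x^\vee=y\cdot,\quad \hat y^\vee=-\hbar\,d/dx^\vee,
\end{equation*}
the functional equation of each dilogarithm factor producing precisely the half-shifts $\pm\hbar/2$ and reproducing $P_2(e^{x^\vee})e^{-y^\vee}-P_1(e^{x^\vee})=0$ as $\hbar\to0$. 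I expect this to be the main obstacle: pinning down the closed form of the LogTR corrections to all orders in $\hbar$ and matching both the normal ordering and the exact half-integer shifts in $\hat P^\vee$.

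The final step is mechanical. As $b$ is a common pole of $dx$ and $dy$, \cref{t:Lap} gives $\psi=\ms{L}_{\mc{S}}^{-1}[\psi^\vee]$, so applying $\ms{L}_{\mc{S}}^{-1}$ to $\hat P^\vee\psi^\vee=0$ yields $\big(\ms{L}_{\mc{S}}^{-1}\hat P^\vee\ms{L}_{\mc{S}}\big)\psi=0$. Conjugation by the Laplace transform is an algebra map and hence sends exponentials of the generators to exponentials of their images; by \cref{l:LapParts} it intertwines the dual generators with the original ones via $\hat x^\vee\mapsto\hat y$ and $\hat y^\vee\mapsto\hat x$, while the scalar shifts $\pm\hbar/2$ are untouched. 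Therefore $\ms{L}_{\mc{S}}^{-1}\hat P^\vee\ms{L}_{\mc{S}}=P_2(e^{\hat y-\hbar/2})e^{\hat x+\hbar/2}-P_1(e^{\hat y-\hbar/2})$, which is exactly the claimed quantum curve, completing the proof.
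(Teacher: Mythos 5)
Your proposal is correct and follows essentially the same route as the paper: choose the coordinate so that $y=\log z$, use that the dual curve is unramified so the dual correlators reduce to the LogTR $(g,1)$ contributions, write $\psi^{\vee}$ in closed form, verify the dual quantum curve of \cite{GS11}, and pull it back through $\ms{L}_{\mc{S}}^{-1}$ using \cref{l:LapParts}. The step you single out as the main obstacle---the all-order closed form of the LogTR corrections---is precisely what the paper imports as \cite[corollary 3.8]{ABDKS24}, namely $\sum_{g\geq 0}\hbar^{2g-1}\omega^{\vee}_{g,1} = -\tfrac{1}{2}d_z\csch\left(\tfrac{1}{2}\hbar\partial_{y(z)}\right)x(z)$, after which the annihilation of $\psi^{\vee}$ follows from a two-line shift-operator identity rather than a per-root quantum-dilogarithm argument.
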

\begin{proof}
	Choose the uniformisation coordinate $z$ so that $y(z) = \log(z)$. By the restrictions on the degrees of $P_1$ and $P_2$ both $dx$ and $dy$ have simple poles at $b=0,\infty$. One now wishes to compute the dual wavefunction $\psi^{\vee}(z;b)$. Based on suggestive computations made in \cite{H23b}, \cite[corollary 3.8]{ABDKS24} proved the following
	\begin{equation}
		\sum_{g=0}\hbar^{2g-1}\omega^{\vee}_{g,1}(z_1) = -\frac{1}{2}d_z \csch(\frac{1}{2}\hbar\partial_{y(z)}) x(z)\,,
	\end{equation}
	where $\partial_{y(z)}^{-1}$ is naturally interpreted as integrating with respect to $dy(z)$ (the integration constant will play no role here). Thus the dual wavefunction is
	\begin{equation}
		\psi^{\vee}(z;b) = \exp\left(\frac{1}{2}y(z) - \frac{1}{2}\csch(\frac{1}{2}\hbar\partial_{y(z)}) x(z)\right) \,.
	\end{equation}
	Now observe that $e^{a\partial_{y(z)}}f(z) = f(e^a z)$ for any constant $a$. Next, given the fact that $P_2(e^{y(z)-\hbar/2})/P_1(e^{y(z)-\hbar/2}) = e^{x(e^{-\hbar/2}z)}$, one computes
	\begin{equation}
		\frac{P_2(e^{y(z)-\hbar/2})}{P_1(e^{y(z)-\hbar/2})}e^{-\hbar\partial_{y(z)}+\hbar/2}\psi^{\vee}(z;b) = e^{\frac{1}{2}y(z) - x(e^{-\hbar/2}z) - \frac{1}{2}\csch(\frac{1}{2}\hbar\partial_{y(z)}) x(e^{-\hbar}z)} = e^{\frac{1}{2}y(z) - \frac{1}{2}\csch(\frac{1}{2}\hbar\partial_{y(z)}) x(z)}\,,
	\end{equation}
	where the last equality follows from
	\begin{equation}
		\frac{e^{-u}}{e^{\frac{1}{2}u}-e^{-\frac{1}{2}u}} = -e^{-\frac{1}{2}u} + \frac{1}{e^{\frac{1}{2}u}-e^{-\frac{1}{2}u}} \,,
	\end{equation}
	when both sides are expanded in series about $u=0$. Ergo,
	\begin{equation}
		\left[P_2(e^{\hat{x}^{\vee}-\hbar/2})e^{\hat{y}^{\vee}+\hbar/2}-P_1(e^{\hat{x}^{\vee}-\hbar/2})\right] \psi^{\vee}(z;b) = 0 \,,
	\end{equation}
	holds and taking the Laplace transform yields the desired result.
\end{proof}

\begin{remark}\label{r:lim}
	It would be nice to drop some of the assumptions of the above theorem. Indeed, one could try dropping the condition that $P_1/P_2$ has only simple zeros and poles as follows. Write
	\begin{equation}
		P_1(e^y) = A\prod_{i=1}^{d_1} (e^y - a_i)^{m_i},\,\qquad P_2(e^y) = B\prod_{i=1}^{d_2} (e^y - b_i)^{m_i}\,. 
	\end{equation}
	Then define
	\begin{equation}
		P_1^{\bs{\epsilon}}(e^y) = A\prod_{i=1}^{d_1} (e^y - a_i + \epsilon_i)^{m_i},\,\qquad P_2^{\bs{\delta}}(e^y) = B\prod_{i=1}^{d_2} (e^y - b_i + \delta_i)^{m_i}\,.,
	\end{equation}
	where $\bs{\epsilon} = (\epsilon_1,\dots,\epsilon_{d_1}) \in \C^{d_1}$ and $\bs{\delta} = (\delta_1,\dots,\delta_{d_2}) \in \C^{d_2}$. For generic values of $\bs{\epsilon}$ and $\bs{\delta}$ the curve
	\begin{equation}
		P_{\bs{\epsilon},\bs{\delta}}(x,y) = P_2^{\bs{\delta}}(e^y)e^x - P_1^{\bs{\epsilon}}(e^y) = 0\,,
	\end{equation}
	satisfies the conditions of \cref{t:QC}. What happens in the limit $\bs{\epsilon},\bs{\delta}\to \bs{0}$? The answer is provided by \cite{BBCKS23}. The limit commutes with TR, provided one takes into account \emph{all ramification points} (see \cite{BE13,BBCKS23} for the appropriate definition of TR at non-simple ramification points), not just those that are zeros of $dx$. In particular, if $P_1/P_2$ has a non-simple zero or poles, then $x$ should be taken to be ramified at this point. If one includes contributions to the correlators $\omega_{g,n}$ from all ramification points, then the above theorem should still hold without the condition that $P_1/P_2$ has only simple zeros and poles. However, in applications, such as the relation to knot theory, these ramification points are often ignored \cite{BE12}.
\end{remark}

\begin{remark}\label{r:QC}
	One could also prove a general theorem for curves of the form $P_2(e^y)x - P_1(e^y) = 0$ using similar logic, but with the only condition being that one of the two limits
	\begin{equation}
		\lim\limits_{z\to 0} \frac{P_1(z)}{P_2(z)}\,,\qquad \lim\limits_{z\to \infty} \frac{P_1(z)}{P_2(z)}\,,
	\end{equation}
	where $z=e^y$, is infinite (this condition means one has an appropriate basepoint $b$). The result is
	\begin{equation}
		\left[P_2(e^{\hat{y}})\left(\hat{x}+\frac{1}{2}\hbar\right)-P_1(e^{\hat{y}})\right]\psi(z;b) = 0\,.
	\end{equation}
	Quantum curves of this form are computed in \cref{ss:GW}.
\end{remark}

The rest of this work will be devoted to constructing quantum curves in interesting examples.

\subsection{Examples}

\subsubsection{$P(x,y) = e^x + e^{(f+1)y} - e^{fy} = 0$}\label{sss:mc}

Here, for an integer $f\in\Z$ called the framing, the quantisation of the spectral curve
\begin{equation}\label{e:C3}
	\mc{S} = \left(\P^1,\, x(z) = f\log(z) + \log(1-z),\, y(z) = \log(z),\, \omega_{0,2}(z_1,z_2) = \frac{dz_1dz_2}{(z_1-z_2)^2}\right)\,,
\end{equation}
is computed. The correlators associated to this curve are known to calculate the Gromov-Witten invariants of framed toric Calabi-Yau threefolds when expanded around $z=1$ in the variable $e^{x(z)}$ \cite{BKMP08,BS12,Z12,FLZ20}.

Here, as long as $f\neq 0$, the conditions of \cref{t:QC} are satisfied, so the following quantum curve is obtained for the basepoint $b=0,\infty$
\begin{equation}
	\left[e^{(f+1)\hbar/2}e^{\hat{x}} + e^{-\hbar/2}e^{(f+1)\hat{y}} - e^{f\hat{y}}\right]\psi(z;b) = 0\,.
\end{equation}

A quantum curve in this case was previously obtained in \cite{Z12}, where the wavefunction killed by the quantum curve in \cite{Z12} has the natural choice of basepoint $b=1$.

\subsubsection{$P(x,y) = c^P e^x(e^{y}-c^2)^Q - c^Q e^{Py}(e^{y}-1)^Q = 0$}\label{sss:QP}

Take two integers $Q,P\in \Z$,\footnote{The integer $P$ should not be confused with the relation $P(x,y) = 0$.} $c\in\C\setminus\{0,1\}$, and examine the following spectral curve
\begin{multline}\label{e:QPTor}
	\mc{S} = \left(\P^1 ,\, x(z) = P\log(z) + Q\log(z-1) - Q\log(z-c^2) + (Q-P)\log(c),\right. \\ 
	\left. y(z) = \log(z),\, \omega_{0,2}(z_1,z_2) = \frac{dz_1dz_2}{(z_1-z_2)^2}\right)\,.
\end{multline}
The correlators of this curve are related to $(Q,P)$ torus knots\footnote{Torus knots are knots that one can view as living on the surface of an unknotted torus embedded in $\R^3$.} \cite{ABM12,BE12,GJKS14,DPSS19}.

Here, to satisfy the conditions of $\cref{t:QC}$, $P$ can not be zero and $|Q|$ must be less than or equal to one. The restriction on $Q$ means the associated Torus knot is just the unknot \cite{BE12}. By \cref{t:QC}, the following quantum curve is obtained with basepoint $b=0,\infty$
\begin{equation}
	\begin{split}
		Q\geq 0:& \qquad \left[c^Pe^{\hat{x}+\hbar/2}(e^{\hat{y}+\hbar/2}-c^2)^Q - c^Qe^{P(\hat{y}-\hbar/2)}(e^{\hat{y}-\hbar/2}-1)^Q\right] \psi(z;b) = 0\,, \\
		Q\leq 0:& \qquad \left[c^Pe^{\hat{x}+\hbar/2}(e^{\hat{y}+\hbar/2}-1)^{-Q} - c^Qe^{P(\hat{y}-\hbar/2)}(e^{\hat{y}-\hbar/2}-c^2)^{-Q}\right] \psi(z;b) = 0\,.
	\end{split}	
\end{equation}

For the spectral curve of \cref{e:QPTor}, a quantum curve was previously obtained in \cite{DPSS19}. This quantum curve is different than the one obtained here, as the basepoint $b=0,\infty$ is not the natural one from the perspective of the knot theory application.

\subsubsection{$P(x,y) = e^x - ye^{-\phi(y)} = 0$}\label{sss:H} 

Let $\phi$ be a meromorphic function on $\P^1$ and consider the spectral curve
\begin{equation}\label{e:SCHurwitz}
	\mc{S} = \left(\P^1,\, x(z) = \log z - \phi(z),\, y(z) = z,\, \omega_{0,2}(z_1,z_2) = \frac{dz_1dz_2}{(z_1-z_2)^2} \right)\,.
\end{equation}
For various choices of $\phi(z)$ the correlators $\omega_{g,n}$ produced from $\mc{S}$ can be viewed as generating functions of different sorts of Hurwitz numbers, which are counts of ramified coverings from a genus $g$ Riemann surface to $\P^1$ with given ramification behaviour (different `given ramification behaviours' lead to different sorts of Hurwitz numbers), up to equivalence, and weighted by automorphism \cite{BDKS20a}. To extract the Hurwitz numbers, one expands the correlators $\omega_{g,n}$ around $z=0$ in the variable $e^{x(z)}$ \cite{ABDKS24b}. For example, the choice $\phi(z) = z$ leads to $\omega_{g,n}$ that calculate the simple Hurwitz numbers when expanded around $z=0$ in the variable $ze^{-z}$ \cite{BM08,EMS11}. 

For the basepoint $b=\infty$ the dual wavefunction of $\mc{S}$ is easily computed
\begin{equation}
	\psi^{\vee}(z;\infty) = \exp\left(\frac{1}{\hbar}\int^z\phi dy - \frac{1}{2}\csch(\frac{1}{2}\hbar\partial_z)\log z\right)\,,
\end{equation}
where $\partial_z^{-1}$ indicates indefinite integration with respect to $z=y$. Via an analogous computation to the one performed in the proof of \cref{t:QC} one finds
\begin{equation}
	 \frac{1}{z-\hbar/2}e^{-\hbar\partial_z}e^{-\frac{1}{2}\csch(\frac{1}{2}\hbar\partial_z)\log z} = e^{-\left(e^{-(\hbar/2)\partial_z}+e^{-\hbar\partial_z}\frac{1}{2}\csch(\frac{1}{2}\hbar\partial_z)\right)\log z} = e^{-\frac{1}{2}\csch(\frac{1}{2}\hbar\partial_z)\log z}\,,
\end{equation}
so it follows that
\begin{equation}
	\left[\hat{x}^{\vee}-\frac{1}{2}\hbar - e^{\hat{y}^{\vee}}\right]e^{-\frac{1}{2}\csch(\frac{1}{2}\hbar\partial_z)\log z} = 0 \Rightarrow e^{\frac{1}{\hbar}\int^z \phi dy}\left[\hat{x}^{\vee}-\frac{1}{2}\hbar - e^{\hat{y}^{\vee}}\right]e^{-\frac{1}{\hbar}\int^z \phi dy}\psi^{\vee}(z;\infty) = 0\,.
\end{equation}
Taking the inverse Laplace transform yields the following proposition
\begin{proposition}
	The quantum spectral curve of $\mc{S}$ with basepoint $b=\infty$ is
	\begin{equation}\label{e:QCHur}
		\left[\hat{y}-\frac{1}{2}\hbar - e^{\frac{1}{\hbar}\int^{\hat{y}} \phi dy} e^{\hat{x}} e^{-\frac{1}{\hbar}\int^{\hat{y}} \phi dy}\right] \psi(z;\infty) = 0\,,
	\end{equation}
	where the notation $\int^{\hat{y}}\phi dy$ means to take an antiderivative of $\phi$, and then formally substitute $\hat{y}$ as its argument; notice the arbitrary constant in the antiderivative cancels in the above formula.
\end{proposition}

\begin{remark}
	By the Baker-Campbell-Hausdorff formula and the fact that $[\hat{x},f(\hat{y})] = -\hbar f'(\hat{y})$
	\begin{equation}
		e^{\frac{1}{\hbar}\int^{\hat{y}} \phi dy} e^{\hat{x}} e^{-\frac{1}{\hbar}\int^{\hat{y}} \phi dy} = e^{\phi(\hat{y}) + \mathcal{O}(\hbar)}\,,
	\end{equation}
	so it is indeed correct to call \cref{e:QCHur} a quantum curve.
\end{remark}

\begin{remark}
	Often in the literature the function $X(z) = e^{x(z)}$ is used in the definition of the spectral curve $\mc{S}$ and the definition of the $1$-form $\omega_{0,1}$ is modified so that $\omega_{0,1} \coloneq ydX/X$. This changes the $\hbar^0$ term in the $\hbar$ expansion of the logarithm of the wavefunction
	\begin{equation}
		-\frac{1}{2}\log\frac{dx}{dz} = \frac{1}{2}x - \frac{1}{2}\log\frac{dX}{dz}\,.
	\end{equation}
	Therefore, to change to the convention of using $X$, one should multiply the wavefunction by $e^{-\frac{1}{2}x}$. This changes the quantum curve via the shift $\hat{y} \to \hat{y}+\frac{1}{2}\hbar$, so rather than \cref{e:QCHur} one instead obtains
	\begin{equation}
		\left[\hat{y} - e^{\frac{1}{\hbar}\int^{\hat{y}+\frac{1}{2}\hbar} \phi dy} \hat{X} e^{-\frac{1}{\hbar}\int^{\hat{y}+\frac{1}{2}\hbar} \phi dy}\right] \psi(z;\infty) = 0\,,
	\end{equation}
	where $\hat{y} \coloneq \hbar X d/dX$.
\end{remark}

Unfortunately, the above wavefunction is constructed for the basepoint $b=\infty$, whereas one wants to expand around the point $z=0$ to obtain the Hurwitz numbers; therefore the above wavefunction does not have a natural enumerative interpretation. Quantum curves with the natural choice of basepoint $b=0$ have been obtained previously for special cases of spectral curves of the form \eqref{e:SCHurwitz} \cite{MSS13,ALS16}.

\subsection{Gromov-Witten theory of $\P(a,b)$}\label{ss:GW}

\subsubsection{Connected stationary Gromov-Witten invariants of $\P^1$}\label{sss:GWP1}

Here the spectral curve
\begin{equation}\label{e:GW}
	\mc{S} = \left(\P^1 ,\, x(z) = z+z^{-1} ,\, y(z) = \log(z) ,\, \omega_{0,2}(z_1,z_2) = \frac{dz_1dz_2}{(z_1-z_2)^2}\right)\,,
\end{equation}
is examined. The functions $x$ and $y$ satisfy identically the equation $P(x,y) = x - 2\cosh(y) = 0$. This curve is of interest as the corresponding correlators are generating functions for the Gromov-Witten invariants of $\P^1$ \cite{NS14,DOSS14,FLZ17}.

To be more precise let $\overline{\mc{M}}_{g,n}(\P^1,d)$ be the moduli space of degree $d$ stable maps from an algebraic curve of genus $g$ with $n$ marked points to the complex projective line. For $i=1,\dots,n$ denote by $\mathrm{ev}_i:\overline{\mc{M}}_{g,n}(\P^1,d)\to \P^1$ the map defined by evaluating a stable map at the $i^\text{th}$ marked point and take $\mc{L}_i$ to be the cotangent bundle over the $i^\text{th}$ marked point. Then let $\psi_i \in H^2(\overline{\mc{M}}_{g,n}(\P^1,d);\Q)$ be the first Chern class of $\mc{L}_i$. Finally, for $a_1,\dots,a_n\in \Z_{\geq 0}$ define the connected stationary Gromov-Witten invariants as
\begin{equation}
	\left\<\prod_{i=1}^n \tau_{a_i}(\omega) \right\>^d_{g,n} \coloneq \delta_{\sum_i a_i,\,2g-2+2d} \int_{\left[\overline{\mc{M}}_{g,n}(\P^1,d)\right]^{\text{vir}}} \prod_{i=1}^n\psi_i^{a_i}\mathrm{ev}_i^*(\omega)\,,
\end{equation}
where $\left[\overline{\mc{M}}_{g,n}(\P^1,d)\right]^{\text{vir}}$ is the virtual fundamental class of $\overline{\mc{M}}_{g,n}(\P^1,d)$ and $\omega \in H^2(\P^1;\Q)$ is Poincar\'e dual to the point class. For more details on the definition of Gromov-Witten invariants see \cite[section 3]{NS14}.

It is shown in \cite[theorem 5.2]{DOSS14} (the $g=0,1$ case was established in \cite[theorem 1.1]{NS14}), that for $2g+n-2>0$
\begin{equation}\label{e:GWP1}
	\omega_{g,n}(z_{\llbracket n \rrbracket}) \stackrel{x(z_i)\to\infty}{\sim} (-1)^n\sum_{a_1,\dots,a_n=0}^{\infty} \left\<\prod_{i=1}^n \tau_{a_i}(\omega) \right\>_{g,n} \prod_{i=1}^n (a_i+1)! x(z_i)^{-a_i-2} dx(z_i) \,,
\end{equation}
where the omission of the superscript $d$ in the Gromov-Witten invariant indicates that $d$ is chosen such that $\sum_i a_i = 2g-2+2d$ holds. The unstable cases $(g,n) = (0,1),(0,2)$ have the following expansions \cite{NS14} 
\begin{equation}
	\begin{split}
		\omega_{0,1}(z_1) &\stackrel{x(z_1)\to\infty}{\sim} \log x(z_1) dx - \sum_{a_1=0}^{\infty} \left\< \tau_{a_1}(\omega) \right\>_{g,n} \frac{(a_1+1)!}{x(z_1)^{a_1+2}} dx(z_1)\,,\\
		\omega_{0,2}(z_1,z_2) &\stackrel{x(z_1),x(z_2)\to\infty}{\sim} \frac{dx(z_1)dx(z_2)}{(x(z_1)-x(z_2))^2} + \sum_{a_1,a_2=0}^{\infty} \left\< \tau_{a_1}(\omega)\tau_{a_2}(\omega) \right\>_{g,n} \frac{(a_1+1)!(a_2+1)!}{ x(z_1)^{a_1+2} x(z_1)^{a_2+2} } dx(z_1) dx(z_2)\,.
	\end{split}
\end{equation}

Thus the wavefunction $\psi(z;0)$ takes the form
\begin{equation}
	\psi(z;0) = \exp\left(\frac{1}{\hbar}x(z)(\log x(z) - 1) + \sum_{n=1}^{\infty}\sum_{g=0}^{\infty}\frac{\hbar^{2g+n-2}}{n!} \left\< \left(\sum_{a=0}^{\infty}\frac{a!\tau_{a}(\omega)}{x(z)^{a+1}}\right)^n \right\>_{g,n}\right)\,.
\end{equation}

Notice that $dx$ and $dy$ both have a pole at $b=0$, so the above wavefunction can be computed as the Laplace transform of the dual wavefunction by \cref{t:Lap}. As all the dual stable correlators vanish, the dual wavefunction itself is easily computed as
\begin{equation}
	\psi^{\vee}(z; 0) = \exp\left(\frac{1}{2}y(z) - \int^z x dy \right)\,,
\end{equation}
which means
\begin{equation}
	\hat{y}^{\vee} \psi^{\vee}(z; 0) = \left(x(e^{y(z)}) - \frac{1}{2}\hbar\right) \psi^{\vee}(z;0)\,,
\end{equation}
recalling that $\hat{y}^{\vee} = -\hbar d/dy$. Thus, taking the Laplace transform of the quantisation of the dual spectral curve one immediately obtains the following proposition.
\begin{proposition}\label{p:GWcurve}
	The quantisation of the spectral curve of \cref{e:GW} with basepoint $b=0$ is given by
	\begin{equation}
		\left[\hat{x} + \frac{1}{2}\hbar - 2\cosh(\hat{y})\right]\psi(z; 0) = 0 \,.
	\end{equation}
\end{proposition}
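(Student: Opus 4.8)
The plan is to sidestep the recursion by computing the dual wavefunction $\psi^\vee$ explicitly and transporting a quantum curve for it back through the inverse Laplace transform. For the spectral curve \eqref{e:GW} one has $x = 2\cosh y$ with $y = \log z$, and both $dx$ and $dy$ have poles at $b = 0$; moreover the curve is fully admissible, since its only ramification points are $z = \pm 1$, where $dy$ is regular and nonzero, while the dual curve has no ramification at all. Hence \cref{t:Lap} applies and gives $\psi(z;0) = \ms{L}^{-1}_{\mc{S}}\!\left[\psi^\vee(z;0)\right]$, so it suffices to produce a differential operator annihilating $\psi^\vee$ and push it through $\ms{L}^{-1}_{\mc{S}}$.

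First I would compute $\psi^\vee$. Since $x^\vee = y = \log z$ gives $dx^\vee = dz/z$, which has no zeros, $\mc{S}^\vee$ has no ramification points, the recursive term of the topological recursion drops out, and all dual correlators with $n \geq 2$ vanish; only the one-point tower $\omega^\vee_{g,1}$ survives. I would evaluate it using the closed formula $\sum_{g}\hbar^{2g-1}\omega^\vee_{g,1} = -\tfrac{1}{2}\,d_z\csch\!\big(\tfrac{1}{2}\hbar\,\partial_{y(z)}\big)x(z)$ of \cite{ABDKS24}, together with the regularised unstable $(0,1)$ and $(0,2)$ contributions of \cref{d:wave}. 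Assembling these yields a pure exponential $\psi^\vee(z;0)$ whose exponent is built from $\int x\,dy$ together with the $\tfrac{1}{2}y$ produced by the unstable sector.

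Reading off the dual quantum curve is then immediate: applying $\hat{y}^\vee = -\hbar\,d/dy$ to the exponential produces the first-order relation $\hat{y}^\vee\psi^\vee(z;0) = \big(x(z) - \tfrac{1}{2}\hbar\big)\psi^\vee(z;0)$, i.e. $\big[\hat{y}^\vee - 2\cosh(\hat{x}^\vee) + \tfrac{1}{2}\hbar\big]\psi^\vee = 0$, where $\hat{x}^\vee = y\,\cdot$ and $x = 2\cosh y$. Finally I would invoke \cref{l:LapParts}: under $\ms{L}^{-1}_{\mc{S}}$ the dual momentum $\hat{y}^\vee$ intertwines with the multiplication operator $\hat{x}$, while the dual multiplication $\hat{x}^\vee$ intertwines with $\hat{y}$, so that exponentiating gives $2\cosh(\hat{x}^\vee) \mapsto 2\cosh(\hat{y})$. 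The first-order dual equation thereby transports to the difference equation $\big[\hat{x} + \tfrac{1}{2}\hbar - 2\cosh(\hat{y})\big]\psi(z;0) = 0$, which is the claim; equivalently this is the instance $P_2(w) = w$, $P_1(w) = w^2 + 1$ of the pattern recorded in \cref{r:QC}.

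The main obstacle is the second step, the exact determination of $\psi^\vee$. The collapse of the recursive part is clean, but the one-point differentials $\omega^\vee_{g,1}$ and the regularisation of the unstable correlators must be tracked carefully, since it is precisely the unstable sector that fixes the additive constant $\tfrac{1}{2}\hbar$ (rather than $0$) in the quantum curve; an error there would alter exactly this shift. I would cross-check the outcome two ways: against the $\csch$ generating formula order by order in $\hbar$, and against the internal requirement in \cref{t:Lap} that the prefactors $A(z_2), B(z_2)$ be absorbable into the regularisation so that no spurious constant of proportionality survives the limit $z_2 \to b$. Granting the correct $\psi^\vee$, steps three and four are formal.
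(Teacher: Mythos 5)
Your strategy is the same as the paper's: exploit that both $dx$ and $dy$ have a pole at $b=0$ so that \cref{t:Lap} gives $\psi(z;0)=\ms{L}^{-1}_{\mc{S}}[\psi^{\vee}(z;0)]$, compute $\psi^{\vee}$ in closed form, read off a first-order equation for it, and transport that equation through \cref{l:LapParts}. Your steps one, three and four are exactly the paper's proof. The genuine flaw is in step two, which you yourself identify as the crux. You assert that only the dual correlators with $n\geq 2$ vanish and that ``the one-point tower $\omega^{\vee}_{g,1}$ survives,'' to be evaluated with the $\csch$ formula of \cite{ABDKS24}. For this curve that is wrong on both counts. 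The dual curve has $dx^{\vee}=dy=dz/z$, which has no zeros, so the recursion term in \cref{d:TR} is empty; but in addition $dy^{\vee}=-dx=-(1-z^{-2})\,dz$ has only \emph{double} poles at $z=0,\infty$, so the set $L$ of simple poles is also empty and the logarithmic correction term in \cref{d:TR} vanishes as well. Hence \emph{every} stable dual correlator vanishes, including $\omega^{\vee}_{g,1}$ for all $g\geq 1$; this total vanishing is precisely what the paper uses. The $\csch$ formula of \cite[corollary 3.8]{ABDKS24} is proved for curves whose $x$ has logarithmic singularities, so that $dy^{\vee}=-dx$ has simple poles and the correction term of \cref{d:TR} is active; its positive powers of $\hbar$ are exactly those corrections. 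Here $x=z+z^{-1}$ is meromorphic, and the formula contradicts the recursion itself: it would give, e.g., $\omega^{\vee}_{1,1}=\frac{1}{24}\,d_z\partial_{y(z)}x=\frac{1}{24}(1+z^{-2})\,dz\neq 0$.

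This is not a cosmetic point. Since $e^{\pm y}$ are eigenfunctions of $\partial_{y}$, carrying your step two out literally gives the exponent $\frac{1}{2}y-\frac{1}{2}\csch\left(\frac{\hbar}{2}\right)(z-z^{-1})$ for $\psi^{\vee}$, whence
\begin{equation}
	\hat{y}^{\vee}\psi^{\vee}(z;0) = \left[\hbar\csch\left(\tfrac{\hbar}{2}\right)\cosh\big(y(z)\big) - \tfrac{\hbar}{2}\right]\psi^{\vee}(z;0)\,,
\end{equation}
and transporting this through the inverse Laplace transform produces $\left[\hat{x}+\frac{\hbar}{2}-\hbar\csch\left(\frac{\hbar}{2}\right)\cosh(\hat{y})\right]\psi(z;0)=0$, which disagrees with the claimed quantum curve at order $\hbar^{2}$ (as $\hbar\csch(\hbar/2)=2-\hbar^{2}/12+\dots$). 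So your steps two and three are mutually inconsistent: the relation $\hat{y}^{\vee}\psi^{\vee}=(x-\frac{\hbar}{2})\psi^{\vee}$ that you quote in step three is correct, but it follows from $\psi^{\vee}=\exp\left(\frac{1}{2}y-\frac{1}{\hbar}\int^{z}x\,dy\right)$, i.e.\ from the total vanishing of the stable dual correlators, not from the $\csch$ formula. Note also that your proposed cross-check would not catch the error, since it takes the inapplicable $\csch$ formula as ground truth. The repair is simple: replace step two by the observation that both residue sets in \cref{d:TR} are empty for $\mc{S}^{\vee}$, so $\psi^{\vee}$ consists of the unstable part alone; the rest of your argument then goes through and coincides with the paper's proof.
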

A quantisation with an alternative integration scheme has been computed previously \cite{DMNPS17}. The quantisation derived here is actually a special case of \cite[theorem 2.2]{CG19} (it should be noted, however, that \cite{CG19} did not obtain their result from the topological recursion, and the methodology used in \cite{DMNPS17} was specific to this situation rather than a general method to obtain quantisations). 

\subsubsection{$T$-equivariant Gromov-Witten theory of $\P(a,b)$}\label{sss:GWPab}

For $a,b\in\Z_{\geq 1}$ coprime\footnote{Although the geometrical context requires that $a$ and $b$ are coprime, \cref{t:GW} will still give a quantisation of the spectral curve \eqref{e:GWT} when $a$ and $b$ are not coprime.} define the weighted projective line $\P(a,b) \coloneq \left(\C^2\setminus\{(0,0)\}\right)/\sim$ where $(p_1,p_2)\sim (\lambda^b p_1, \lambda^a p_2)$ for $\lambda\in\C^*$, and denote by $[p_1:p_2]$ a homogenous coordinate on $\P(a,b)$. Let the Torus $\mb{T}\coloneq (\C^*)^2$ act on $\P(a,b)$ via $(t_1,t_2)\cdot [p_1:p_2] = [t_1p_1 : t_2p_2]$. 

Consider now the spectral curve
\begin{equation}\label{e:GWT}
	\mathcal{S} = \left(\P^1,\, x(z) = \sum_{l=-b}^{a}\left[q_lz^l + w_l\log(q_lz^l)\right],\, y(z) = \log z,\, \omega_{0,2}(z_1,z_2) = \frac{dz_1dz_2}{(z_1-z_2)^2}\right)\,,
\end{equation}
where $q_l,w_l\in \C$ such that $q_a,q_{-b}\neq 0$ and $q_l=0\Rightarrow w_l=0$; one may assume $q_a = 1$ by rescaling the affine coordinate $z$ and $q_0=w_0=0$ as adding a constant to $x$ does not affect the topological recursion. This spectral curve is related to the $T$-equivariant Gromov-Witten invariants of $\P(a,b)$ as sketched below.

As discussed, assume $q_a=1$ and $w_0=0$. Let $T \coloneq \mb{T}^{a+b}$ and denote by $H^*_{T}(\text{pt};\C) = \C[\bs{w}] \coloneq \C[w_{-b},\dots,w_{-1},w_{1},\dots,w_{a}]$ the $T$-equivariant cohomology of a point. Then the $T$-equivariant cohomology of $\P(a,b)$ is given by \cite{T18}
\begin{equation}
	H^*_{T}(\P(a,b);\C) = \C[\bs{w}][X,\bar{X},\bs{p}] / \left< X\bar{X},aX^a-b\bar{X}^b+\bs{p} \right> \,,\quad \bs{p} \coloneq \sum_{l=-b}^a lw_l \,,
\end{equation}
where $\deg X = \deg \bar{X} = \deg w_i = 2$. 

Analogously to the (non-equivariant) $\P^1$ case let $\overline{\mc{M}}_{g,n}(\P(a,b),d)$ be the moduli space of degree $d$ stable maps from an algebraic curve of genus $g$ with $n$ marked points to $\P(a,b)$. Let $\mathrm{ev}_i:\overline{\mc{M}}_{g,n}(\P(a,b),d)\to \P(a,b)$ be the natural evaluation maps, take $\mc{L}_i$ to be the cotangent bundle over the $i^\text{th}$ marked point and put $\psi_i \in H^2(\overline{\mc{M}}_{g,n}(\P(a,b),d);\Q)$ as $\mc{L}_i$'s first Chern class. For $a_1,\dots,a_n\in \Z_{\geq 0}$ and $\gamma_1,\dots,\gamma_n \in H^*_T(\P(a,b);\C)$ define the orbifold descendant Gromov-Witten invariants as
\begin{equation}
	\left\<\prod_{i=1}^n \tau_{a_i}(\gamma_i) \right\>^{d,T}_{g,n} \coloneq \int_{\left[\overline{\mc{M}}_{g,n}(\P(a,b),d)\right]^{\text{vir}}} \prod_{i=1}^n\psi_i^{a_i}\mathrm{ev}_i^*(\gamma_i)\,,
\end{equation}
where $\left[\overline{\mc{M}}_{g,n}(\P(a,b),d)\right]^{\text{vir}}$ is the virtual fundamental class of $\overline{\mc{M}}_{g,n}(\P(a,b),d)$, and it is assumed the Gromov-Witten invariant is zero if the degree of the cohomology class does not align with the virtual dimension.

Let $T_i = X^i\in H^*_T(\P(a,b);\C)$. For $i\neq 0$ put $q_i = Qe^{t_{-i}}$ for some $Q,t_i\in \C$ and let $t_0 = q_0$, so there are $a+b$ variables $t_i$ (as $q_a$ was set to one, there is no $t_a$). Let $\bs{t} \coloneq \sum_i t_{i}T_i$ and for $\gamma_1,\dots,\gamma_n \in H^*_T(\P(a,b);\C)$ define
\begin{equation}
	\left<\left\< \prod_{i=1}^n \tau_{a_i}(\gamma_i) \right\>\right>^{T}_{g,n} \coloneq \sum_{d=0}^{\infty}\sum_{m=0}^{\infty} \frac{Q^d}{m!} \left\< \left(\prod_{i=1}^n \tau_{a_i}(\gamma_i)\right) \big(\tau_0(\bs{t})\big)^m \right\>^{d,T}_{g,n+m}\,.
\end{equation}
For $i=1,\dots,n$ let $\bs{u}_i = \sum_{a\in \Z_{\geq 0}} (u_j)_a z^a$ where $(u_j)_a \in H^*_T(\P(a,b);\C)$ and define
\begin{equation}
	F_{g,n}^{T}(\bs{u}_1,\dots,\bs{u}_n,\bs{t}) = \sum_{a_1,\dots,a_n\in \Z_{\geq 0}} \left<\left\< \prod_{i=1}^n \tau_{a_i}((u_i)_{a_i}) \right\>\right>^{T}_{g,n}\,.
\end{equation}
With an appropriate identification of the $\bs{u}_i$ these $F_{g,n}^T$ are essentially the $\omega_{g,n}$ obtained from the spectral curve $\mc{S}$ \cite[theorem 1]{T18} (see also \cite[theorem 1]{FLZ17}). In the case when $a=b=1$, $w_l=q_0=0$, and $q_1=q_{-1}=1$ it was shown in \cite[theorem 4.2]{FLZ17} that this result reduces to what is stated in \cref{e:GWP1}.

However, despite the fact that the spectral curve of \cref{e:GWT} looks significantly more complicated than the one of \cref{e:GW}, as $dy$ has a pole at both $z=0,\infty$ (the only logarithmic singularities of $x$) the stable dual correlators are trivial and the dual wavefunction takes the same simple form
\begin{equation}
	\psi^{\vee}(z; 0) = \exp\left(\frac{1}{2}y(z) - \int^z x dy \right)\,,
\end{equation}
so one can still conclude
\begin{equation}\label{e:dualWGW}
	\hat{y}^{\vee} \psi^{\vee}(z; 0) = \left(x(e^{y(z)}) - \frac{1}{2}\hbar\right) \psi^{\vee}(z; 0)\,.
\end{equation}
Taking the Laplace transform will yield the following theorem.
\begin{theorem}\label{t:GW}
	The quantum spectral curve of $\mc{S}$ with basepoint $b=0$ is given by
	\begin{equation}
		\left[ \hat{x}+\frac{1}{2}\hbar - \sum_{l=-b}^{a} \left[ q_le^{l\hat{y}}+w_l(l\hat{y}+\log q_l) \right] - C \right]\psi(z;0) = 0\,,
	\end{equation}
	where the constant $C\in\bigoplus_{l}2w_l\pi i\Z$ depends on the branch choice of the logarithm.
\end{theorem}
\begin{proof}
	One starts from \cref{e:dualWGW} and writes $x$ as 
	\begin{equation}
		x(e^{y}) = C + \sum_{l=-b}^{a} \left[ q_le^{ly}+w_l(ly+\log q_l) \right]\,,
	\end{equation}
	where the constant $C\in\bigoplus_{l}2w_l\pi i\Z$ is introduced to account for the fact $w_l\log(q_le^{ly})$ and $w_l(\log(q_l) + ly)$ may differ by an element of $2w_l\pi i\Z$ depending on the branch choice of the logarithm. Taking the Laplace transform of \cref{e:dualWGW} with this rewritten $x$ immediately yields the desired result.
\end{proof}

A special case of this result with $a=q_{1}=1$ and $w_l=q_{1-b}=\cdots=q_{0}=0$ was found in \cite[theorem 2.2]{CG19},\footnote{When comparing with \cite{CG19} one should substitute $\hbar \to -\hbar$, $b\to r$, and $q_b \to q^{r}$. Also, the result here is for the so-called principle specialisation, which corresponds to $t=0$ in \cite{CG19}.} where the result was obtained directly from the study of the Gromov-Witten invariants rather than through the topological recursion. A special case of \cite[equation (4.26)]{A21} establishes the above theorem when $a=1$ and $w_l=0$; \cite{A21} obtained his result through the direct study of corresponding matrix models rather than through the topological recursion. 


\section{Conclusion and outlook}\label{s:con}

In this paper the notion that the wavefunction associated to a spectral curve and the dual wavefunction associated to the dual spectral curve are Laplace transforms of one another has been made rigorous. This definition of the Laplace transform acts as a mapping between spaces of formal series in $\hbar$, while still satisfying the key properties one would expect of the Laplace transform.

The implications for the forms of quantisations of the original spectral curve were then examined. In particular, it was shown that, given a quantum curve for the dual spectral curve one could very simply derive a quantum curve for the original spectral curve. This insight was put to use to derive quantum curves for many spectral curves that have particularly simple duals. A general theorem for the quantisation of a large class of curves of the form $P(e^x,e^y)=0$ was formulated, and several quantisations were worked out explicitly, including a quantisation for the spectral curve associated to the $T$-equivariant Gromov-Witten theory of $\P(a,b)$.

Nevertheless, a few interesting questions remain.
\begin{itemize}
	\item All results of the present work are in genus zero. However, many curves that one would like to quantise, especially those of the form $P(e^x,e^y) = 0$, are of higher genus \cite{EGMO21}. Is there a higher genus generalisation of \cref{t:Lap} (see also: \cref{r:Highg})?
	\item The theorem that allows one to quantise genus zero curves of the form $P(e^x,e^y)=0$ has some rather technical restrictions. In what ways are these restrictions necessary, and in what ways can they be lifted (see also: \cref{r:lim})?
	\item The present work only examined the case when the basepoint for the wavefunction and its dual are the same, and both $dx$ and $dy$ posses a pole at this point. Is there a way to interpret the Laplace transform of a general wavefunction (i.e., defined with an arbitrary integration scheme) as a dual wavefunction?
\end{itemize}

{\setlength\emergencystretch{\hsize}\hbadness=10000
\printbibliography}

\end{document}